\def\mo#1{\| #1 \|}\def\kh#1{\left( #1 \right)}
\def\mo#1{\| #1 \|}
\def\len#1{\left| #1 \right|}
\def\calG{\mathcal{G}}
\def\calP{\mathcal{P}}
\newcommand{\rea}{\mathbb{R}}
\newcommand\yy{\boldsymbol{\mathit{y}}}
\newcommand\zz{\boldsymbol{\mathit{z}}}
\newcommand\xx{\boldsymbol{\mathit{x}}}
\newcommand\bb{\boldsymbol{\mathit{b}}}
\newcommand\ee{\boldsymbol{\mathit{e}}}
\newcommand\qq{\boldsymbol{\mathit{q}}}
\newcommand\pp{\boldsymbol{\mathit{p}}}
\newcommand\uu{\boldsymbol{\mathit{u}}}
\newcommand\vs{\boldsymbol{\mathit{s\emph{}}}}
\newcommand{\eps}{\epsilon}
\newcommand\LL{\bm{\mathit{L}}}
\newcommand\AAA{\boldsymbol{\mathit{A}}}
\newcommand\BB{\boldsymbol{\mathit{B}}}
\newcommand\DD{\boldsymbol{\mathit{D}}}
\newcommand\TT{\boldsymbol{\mathit{T}}}
\newcommand\II{\boldsymbol{\mathit{I}}}
\newcommand\WW{\boldsymbol{\mathit{W}}}
\newcommand\UU{\boldsymbol{\mathit{U}}}
\newcommand\XX{\boldsymbol{\mathit{X}}}
\newcommand\YY{\boldsymbol{\mathit{Y}}}
\newtheorem{definition}{Definition}[section]
\newtheorem{lemma}{lemma}[section]
\newtheorem{theorem}{theorem}[section]
\newtheorem{proposition}{Proposition}
\def\kh#1{\left( #1 \right)}
	\providecommand\BibTeX{{%
			\normalfont B\kern-0.5em{\scshape i\kern-0.25em b}\kern-0.8em\TeX}}}
\begin{document}

\title{Fast Evaluation for Relevant Quantities of Opinion Dynamics}

\author{Wanyue Xu}
\affiliation{%
	\institution{Fudan University}
	\city{Shanghai}
	\country{China}}
\email{xuwy@fudan.edu.cn}

\author{Qi Bao}
\affiliation{%
	\institution{Fudan University}
	\city{Shanghai}
	\country{China}}
\email{18210240001@fudan.edu.cn}

\author{Zhongzhi Zhang}
\affiliation{%
	\institution{Fudan University}
	\city{Shanghai}
	\country{China}}
\email{zhangzz@fudan.edu.cn}

\renewcommand{\shortauthors}{Wanyue Xu, Qi Bao and Zhongzhi Zhang}


\begin{abstract}
	One of the main subjects in the field of social networks is to quantify conflict, disagreement, controversy, and polarization, and some quantitative indicators have been developed to quantify these concepts. However, direct computation of these indicators involves the operations of matrix inversion and multiplication, which make it computationally infeasible for large-scale graphs with millions of nodes. In this paper, by reducing the problem of computing relevant quantities to evaluating $\ell_2$ norms of some vectors, we present a nearly linear time algorithm to estimate all these quantities. Our algorithm is based on the Laplacian solvers, and has a proved theoretical  guarantee of  error for each quantity. We execute extensive numerical experiments on a variety of real networks, which demonstrate that our approximation algorithm is efficient and effective, scalable to large graphs having millions of nodes.
\end{abstract}
\begin{CCSXML}
	<ccs2012>
	<concept>
	<concept_id>10003120.10003130.10003134.10003293</concept_id>
	<concept_desc>Human-centered computing~Social network analysis</concept_desc>
	<concept_significance>500</concept_significance>
	</concept>
	<concept>
	<concept_id>10003033.10003083.10003094</concept_id>
	<concept_desc>Networks~Network dynamics</concept_desc>
	<concept_significance>500</concept_significance>
	</concept>
	<concept>
	<concept_id>10002951.10003260.10003282.10003292</concept_id>
	<concept_desc>Information systems~Social networks</concept_desc>
	<concept_significance>500</concept_significance>
	</concept>
	</ccs2012>
\end{CCSXML}

\ccsdesc[500]{Human-centered computing~Social network analysis}
\ccsdesc[500]{Networks~Network dynamics}
\ccsdesc[500]{Information systems~Social networks}

\keywords{Opinion dynamics, social network, multi-agent system, polarization, disagreement, conflict, controversy,  Laplacian solver}

\maketitle

\section{Introduction}

Online social networks and social media are increasingly becoming an important part of our lives, which have led to a fundamental change of ways people share and shape opinions~\cite{JiMiFrBu15,DoZhKoDiLi18,AnYe19}. 
Particularly, the enormous popularity of social media and online social networks produces diverse social phenomena, such as polarization, disagreement, conflict, and controversy, which have been a hot subject of study in different disciplines, especially social science. In fact, some phenomena, for example, disagreement and polarization, have taken place in human societies for millenia, but now they are more apparent in an online virtual world.

In addition to the identification of aforementioned social phenomena, the issue of how to quantify these phenomena has received increasing amounts of attention. Thus far, various measures have been developed to quantify these phenomena, such as disagreement~\cite{MuMuTs18,DaGoLe13}, polarization~\cite{ DaGoLe13, MaTeTs17, MuMuTs18},  conflict~\cite{ChLiDe18}, and controversy~\cite{ChLiDe18}. Most of these measures are based on the Friedkin-Johnsen (FJ)  social-opinion dynamics model~\cite{FrJo90}, which is an important extension of the DeGroot's opinion model~\cite{De74}. Although the expressions of these quantitative metrics seem very concise, rigorous determination for them in large-scale graphs is a computational challenge, since it involves matrix inversion and multiplication.

In this paper, we address the problem of fast calculation for the aforementioned quantitative measures related to opinion dynamics modelled by the well-established FJ model, by exploiting the connection~\cite{MuMuTs18, ChLiDe18} between forest matrix~\cite{ChSh97,ChSh98} and these key quantities. To this end, we first represent these quantities in terms of the $\ell_2$ norm of some vectors. We then provide an algorithm to approximate these quantities in nearly linear time with the number of edges. Our algorithm has a proved error guarantee. Extensive experiments on many real network datasets indicate that our algorithm is efficient and effective, which is scalable to large graphs with millions of nodes.

\textbf{Related work.} The focus of this paper is to propose a fast algorithm approximately evaluating the quantitative metrics of some key social phenomena. We use the FJ model as our underlying opinion dynamics model. Below, we review some work that is closely related to ours.

It is well known, the FJ model is a significant extension of the DeGroot model for opinion dynamics, where opinion may be the understanding or position of individuals on a certain popular topic or subject. The DeGroot model is an iterative averaging model, with each individual having only one opinion~\cite{De74}. At each time step, any individual updates its opinion as a weighted average of its neighbors. For the DeGroot model on a connected graph, it will reach consensus~\cite{Be81}. Under the formalism of the DeGroot model or its variants, many consensus protocols have been proposed or studied~\cite{SaFaMu07}, especially in the literature of  systems~\cite{LiZh17,DuWeChCaAl17,HoYuWeYu17,DuXiReSuWa18,ReAb19} and  cybernetics literature~\cite{WuTaCaZh16,WeLi17,QiZhYiLi19,YiZhSt20}.


Although the FJ model is an extension of the DeGroot model, the former is significantly different from the latter. In the DeGroot model, each node has only one opinion, while  the FJ model  associates each node with two opinions: internal opinion and expressed opinion. Since its establishment, the FJ model has attracted much attention.  A sufficient condition for stability of the FJ model was obtained in~\cite{RaFrTeIs15}, and the equilibrium expressed opinion was derived in~\cite{DaGoPaSa13,BiKlOr15}. Some interpretations of the FJ model were provided in~\cite{BiKlOr15} and~\cite{GhSr14}.  And some optimization problems based on the   FJ model were also introduced, such as opinion  maximization~\cite{GiTeTs13}. Moreover, further extensions of the FJ model were suggested and studied in recent papers~\cite{JiMiFrBu15,SeGrSqRa19}. For example, some multidimensional extensions have been presented for the FJ model~\cite{PaPrTeFr16, FrPrTePa16}.

Except for the properties,  interpretations, and extension of the  FJ model, some measures for disagreement~\cite{MuMuTs18}, polarization~\cite{MaTeTs17, MuMuTs18},  conflict~\cite{ChLiDe18}, and controversy~\cite{ChLiDe18} for this popular model have also been developed and studied. These quantitative metrics provide deep insight into understanding  social phenomena. However, exact computation for  these key  measures is difficult and even impossible for large graphs, since it takes cube running time. In this paper, we give  a computationally
cheaper approach for approximating these quantities.

\section{Preliminary}
In this section, we briefly introduce some basic concepts about undirected weighted graphs, Laplacian matrix,  spanning (rooted) forest, forest matrix, FJ  opinion dynamics model, its  relevant  quantities and their connections  with the forest matrix. 

\subsection{Graph and  Laplacian Matrix}

Let $\calG= (V,E,w)$ be a connected undirected weighted simple graph with $n$ nodes and $m$ edges,  where $V=\{v_1,v_2,\cdots,v_n\}$ is the node  set,  $E=\{e_1,e_2,\cdots,e_m\}$ is the edge set, and $w : E \to \rea_{+}$ is the edge weight function, with the weight of an edge $e$ denoted by $w_e$. Let $w_{\rm max}$ and $w_{\rm min}$ denote, respectively, the maximum and minimum weight among all edges in $E$. A graph is a tree if it is connected but has no cycles. We consider a graph with only an isolated node as  a tree. A forest is a particular graph that is a disjoint union of trees. Thus, a  forest may be  connected or disconnected.  In the sequel, we interchangeably use $v_i$ and $i$ to represent node $v_i$ if incurring no confusion.

The connections  of  graph $\calG$ are encoded in its extended adjacency matrix $\AAA=(w_{ij})_{n \times n}$, with the element $w_{ij}$ at row $i$ and column $j$ representing the strength of connection   between  nodes $i$ and $j$. If nodes $i$ and $j$ are adjacent by an edge $e$ with weight $w_e$, then $w_{ij}= w_{ji}= w_e $; $w_{ij}=w_{ji}=0$ otherwise. Let $\Theta_i$ be the set of neighbours of node $i$. Then the weighted degree of a node $i$ is $d_i=\sum_{j=1}^n w_{ij}=\sum_{j\in \Theta_i} w_{ij}$. The weighted diagonal degree matrix of  $\calG$ is defined as ${\DD} = {\rm diag}(d_1, d_2, \ldots, d_n)$, and the Laplacian matrix of $\calG$ is defined to be ${\LL}={\DD}-{\AAA}$.



An alternative construction of $\LL$ is to use the incidence matrix $\BB \in \mathbb{R}^{|E| \times |V|}$, which is an $m\times n$ signed edge-node incidence matrix. The entry $b_{ev}$, $e\in E$ and $ v\in V$,  of $\BB$ is defined as follows: $b_{ev}=1$ if node $v$ is the head of edge $e$, $b_{ev}=-1$ if node $v$ is the tail of edge $e$, and $b_{ev}=0$ otherwise. Let $\ee_i$ denote the $i$-th standard basis vector. For an edge $e\in E$ with two end nodes $i$ and $j$, the row vector of $\BB$ corresponding to  $e$ can be written as $\bb_{ij}\triangleq\bb_{e}=\ee_i-\ee_j$.  Let  $\WW= {\rm diag}(w_1, w_2, \ldots, w_m)$  be an $m\times m$ diagonal matrix with the $e$-th diagonal entry being the weight of edge $w_e$. Then the Laplacian matrix $\LL$ of $\calG$ can also be represented as $\LL = \BB^\top \WW \BB$.  Moreover, $\LL$ can be written as the sum of product of block matrices as  $\LL = \sum\nolimits_{e\in E} w_e \bb_e \bb_e^\top$, which indicates that $\LL$ is a symmetric and positive semidefinite matrix.

The positive semidefiniteness of Laplacian matrix $\LL$ implies that   all its  eigenvalues are non-negative. Moreover, for a connected graph $\calG$, its  Laplacian matrix $\LL$ has a unique zero eigenvalue. Let $\textbf{1}$ denote the $n$-dimensional column vector with all entries being ones, i.e. $\textbf{1}=\left(1,1\cdots,1\right)^\top$, which is  an eigenvector of  $\LL$ associated with  eigenvalue 0. That is, $\LL \textbf{1}=\textbf{0}$, where  $\textbf{0}$ is the zero vector.  Let $\lambda_1\ge\lambda_2\ge\ldots\ge\lambda_{n-1} \ge \lambda_n=0$ be the $n$ eigenvalues of  $\LL$, and let $\uu_i$ be the orthogonal eigenvector corresponding to $\lambda_i$. Then,  $\LL$ has an eigendecomposition of form $\LL=\UU \Lambda \UU^\top=\sum_{i=1}^{n-1} \lambda_i \uu_i\uu_i^{\top}$ where $\Lambda=\textrm{diag}\left(\lambda_1,\lambda_2,..,\lambda_{n-1},0\right)$ and $\uu_i$ is the $i$-th column of matrix $\UU$.  Let $\lambda_{\max}$ and  $\lambda_{\min}$  be, respectively, the maximum and nonzero minimum  eigenvalue of  $\LL$. Then,  $\lambda_{\max}= \lambda_{1}\leq n\, w_{\rm max}$~\cite{SpSr11}, and $\lambda_{\min}=\lambda_{n-1}\geq w _{\min}/ n^2 $~\cite{LiSc18}.

\subsection{Spanning  Forests and Forest Matrix}\label{Sec:bound}

For a graph  $\mathcal{G}=(V,E,w)$, a subgraph $\mathcal{H}$ is a graph whose sets of nodes and edges are subsets of $V$ and $E$, respectively. If $\mathcal{H}$ and $\mathcal{G}$ have the same node set $V$, we call $\mathcal{H}$ a spanning subgraph of $\mathcal{G}$. A spanning forest on $\mathcal{G}$ is a spanning subgraph of $\mathcal{G}$ that is a forest. A spanning rooted forest of $\mathcal{G}$ is a spanning forest of $\mathcal{G}$, where each tree has a node marked as its root. For a subgraph $\mathcal{H}$ of graph $\mathcal{G}$, the product of the weights of all edges in $\mathcal{H}$ is referred to as the weight of  $\mathcal{H}$, denoted as  $\varepsilon(\mathcal{H})$. If $\mathcal{H}$ has no edges, its weight is set to be 1. For any nonempty set $S$ of subgraphs, we define its weight $\varepsilon(S)$ as $\varepsilon(S) =\sum_{\mathcal{H} \in S}  \varepsilon(\mathcal{H})$. If $S$ is empty, we set its weight to be zero~\cite{ChSh97,ChSh98}.

Suppose that $\Gamma$ is the set of all spanning rooted forests of graph $\calG$ and $\Gamma_{i j}$ is the set of those spanning forests of $\calG$ with nodes $v_i$ and $v_j$  in  the same tree rooted at node $v_i$. Based on the above notions associated with spanning rooted forests, we can define the forest matrix $\mathbf{\Omega}=\mathbf{\Omega}(\calG)$ of graph $\calG$~\cite{GoDrRo81,Ch08}. Let $\II$ be the identity matrix. Then  the forest matrix  is defined as  $\mathbf{\Omega}=\left(\II+\LL\right)^{-1}=(\omega_{ij})_{n \times n}$, where the entry $\omega_{ij}=\varepsilon(\Gamma_{i j})/ \varepsilon(\Gamma)$~\cite{ChSh97,ChSh98}.  For  an arbitrary pair of nodes $v_i$ and $v_j$ in graph $\calG$, $\omega_{ij}\geq 0$ with equality if and only if $\calG$ is disconnected. Moreover, $\omega_{ij}= 0$ if and only if there is no path between  $v_i$ and $v_j$~\cite{Me97}. 

If every edge in $\calG$ has unit  weight, then  $ \varepsilon(\Gamma)$ is equal to the total number of
spanning rooted forests of  $\calG$, and   $\varepsilon(\Gamma_{i j}) $ equals the number of spanning rooted forests of $\calG$, where nodes  $v_i$ and $v_j$ are in the same tree rooted at $v_i$. For example, in the  $5$-node path graph  $\calP_5$, there are exactly $55$ spanning rooted forests, among which there are  $13$   forests where $v_2$ belongs to a tree rooted at $v_1$.  Figure~\ref {SFFP5} illustrates all the 55 spanning rooted forests in  $\calP_5$, where the  13  spanning rooted forests with green background are those, for each of which $v_1$ and $v_2$ belong to the same  tree with $v_1$ being the root. According to Fig.~\ref {SFFP5}, the forest matrix for  graph  $\calP_5$ is
\begin{equation}\label{ForestMatrix}
\bm{\Omega} = \frac{1}{55}\begin{pmatrix}
34 &  13 &   5 &   2 &   1 \\
13 &  26 &  10 &   4 &   2 \\
5 &  10 &  25 &  10 &   5 \\
2 &   4 &  10 &  26 &  13 \\
1 &   2 &   5 &  13 &  34 \\
\end{pmatrix}.\notag
\end{equation}
\begin{figure}
	\begin{center}
		\includegraphics[width=0.96\linewidth]{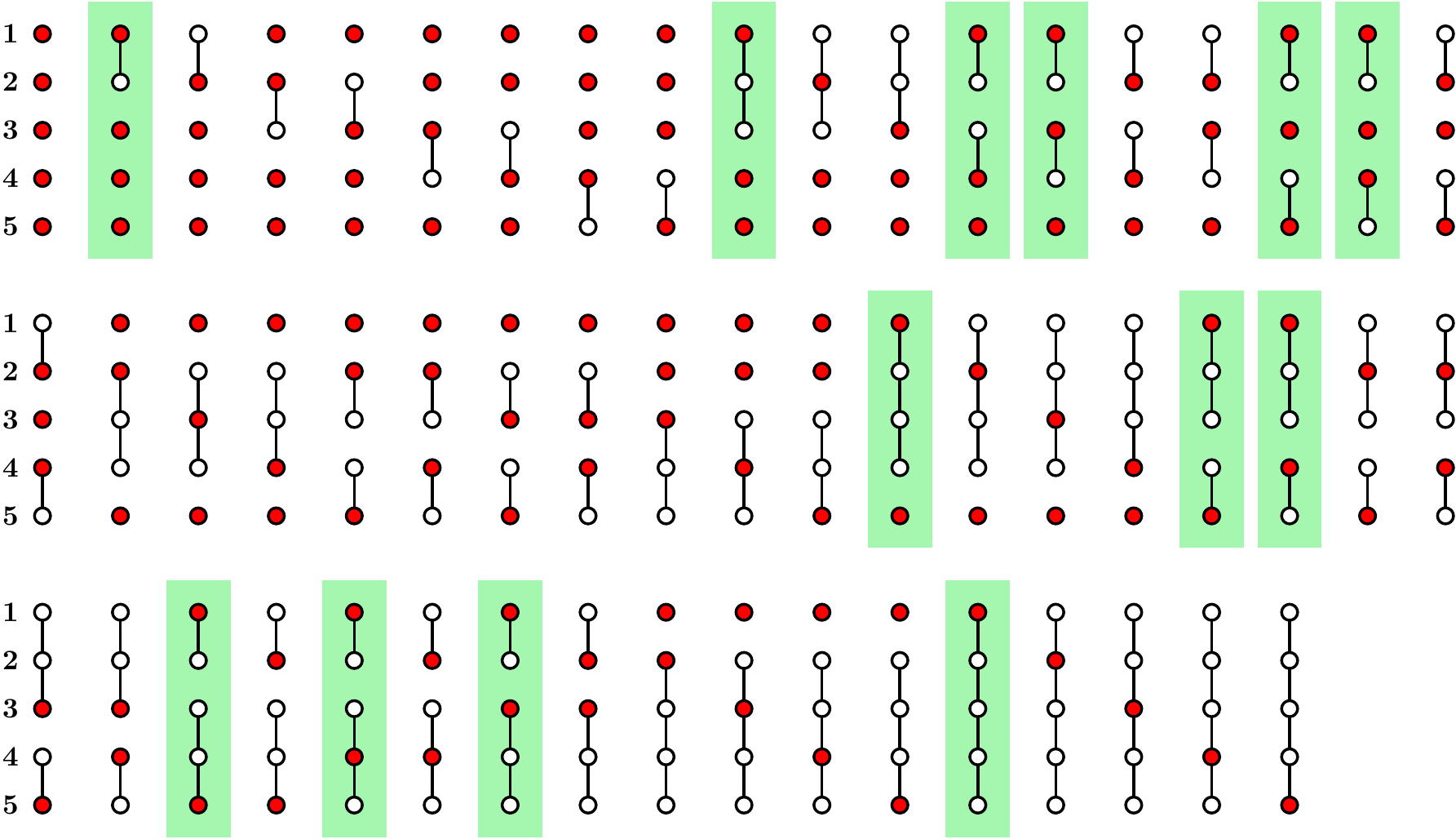}
		\caption{All the spanning rooted forests in  the path graph  $\calP_5$ with $5$ nodes. Those with green background are  forests with $v_2$ being in a tree rooted at $v_1$. The full nodes denote root nodes. }
		\label{SFFP5}
	\end{center}
\end{figure}

The forest matrix  $\mathbf{\Omega}$ is a  symmetric and positive definite matrix, the eigendecomposition of which can be written as $\mathbf{\Omega}=\UU \tilde{\Lambda} \UU^\top=\sum_{i=1}^{n} \frac{1}{\lambda_i+1} \uu_i\uu_i^{\top}$ where $\tilde{\Lambda}$ is a diagonal matrix given by \begin{equation}\nonumber
\tilde{\Lambda}=\textrm{diag}(\frac{1}{1+\lambda_1}, \cdots,\frac{1}{1+\lambda_{n-1}},\frac{1}{1+\lambda_n}),
\end{equation} 
with $\frac{1}{1+\lambda_1}\le\frac{1}{1+\lambda_2}\le\cdots\le\frac{1}{1+\lambda_n}=1.$ It has been shown that~\cite{ChSh97,ChSh98}  $\mathbf{\Omega}$ is a doubly stochastic matrix satisfying  $\mathbf{\Omega} \textbf{1}=\textbf{1}$ and $\textbf{1}^\top \mathbf{\Omega}= \textbf{1}^\top $. For any connected graph, $\omega_{ij}> 0$. Moreover, $\sum_{j=1}^{n}\omega_{ij}=1$ for $i=1,2,\ldots,n$, and $\sum_{i=1}^{n}\omega_{ij}=1$
or $j=1,2,\ldots,n$.

The forest matrix $\mathbf{\Omega}$  is  related to various  practical applications~\cite{FoPiReSa07,SeGaMaShSaFo14,JiBaZh19}.  For example,  its entry $\omega_{ij} $ can be used to  gauge the \textit{proximity} between  nodes $v_i$ and $v_j$: the less the value of $\omega_{ij} $, the ``farther''  $v_i$ from $v_j$~\cite{ChSh97}. Furthermore, since  $\mathbf{\Omega}$ is  doubly stochastic, $\omega_{ij} $ can be explained as the fraction of the connectivity of $v_i$ and $v_j$ in the total connectivity of $v_i$ (or $v_j$) with all nodes~\cite{ChSh98}.

Particularly,  forest matrix is in fact the fundamental matrix~\cite{MaTeTs17} of the FJ  opinion dynamics model~\cite{FrJo90}.  Various important  quantities  of  the FJ model can be expressed in terms of the linear combination of the entries  for forest matrix or quadric forms of  forest matrix  or its variant matrices~\cite{BiKlOr15,ChLiDe18,MuMuTs18}.  In the sequel, we will show that by using the properties of forest matrix, one can provide a fast algorithm evaluating relevant quantities for the FJ opinion dynamics model.

\section{Friedkin-Johnsen Opinion Dynamics Model and Its Relevant Quantities}

This section is denoted to brief introduction to the  FJ  model of  opinion formation, as well as the definitions and measures for  conflict, disagreement, polarization, and controversy, relying on this popular model. Particularly, we give an explanation and some properties of equilibrium  expressed opinions of the FJ model, using the  forest matrix.

\subsection{Friedkin-Johnsen  Model}

As one of the first opinion dynamics models, the FJ model~\cite{FrJo90}  is an extension of the DeGroot's opinion model~\cite{De74}.  In the DeGroot model, every node has only one opinion that is updated as the weighted average of its neighbors. Different from the DeGroot model,  in the FJ model, each node $i \in V$ has two different kinds of opinions: one is the internal (or innate) opinion $s_i$, the other is the expressed opinion $z_i$. The internal  opinion $s_i$  is assumed to remain  constant,  private to node $i$, while the  expressed opinion $z_i$ evolves as a weighted average of  its corresponding internal opinion $s_i$ and the expressed opinions of $i$'s neighbors.   More precisely, the updating rule of  $z_i$ is
\begin{equation}\label{FJmodel}
z_{i}=\frac{ s_{i}+\sum_{j \in \Theta_i} w_{i j} z_{j}}{1+\sum_{j \in \Theta_i} w_{i j}}.
\end{equation}

Note that in the above updating process, we make a common assumption in the literature that the weight of internal  opinion is unit. On the other hand, as popular choice in the literature, we assume that for all $i \in V$, its internal  opinion $s_i$ is in the interval $[0,1]$. Let $\vs=(s_1,s_2,\ldots,s_n)^\top$. Then the expressed opinions updated by iterative process~\eqref{FJmodel} converge to a unique equilibrium opinion vector. Let $\zz=(z_1, z_2,\ldots,z_n)^\top$,  with the value $z_i$ being the  expressed opinion of node $i$ at equilibrium. It was shown~\cite{BiKlOr15} that the equilibrium  expressed opinion vector $\zz$  is the solution to a linear system of equations:
\begin{equation}\label{FJmodel02}
\zz =(\II+\LL)^{-1} \vs .
\end{equation}

Equation~\eqref{FJmodel02} shows that the equilibrium  expressed opinion for every node is determined by the forest matrix $\mathbf{\Omega}=\left(\II+\LL\right)^{-1}$, with  $z_i$ given by $z_i=\sum^n_{j=1}  \omega_{ij}s_j $, for each $i \in V$. Concretely,  for all $i=1,2\ldots,n$,   $z_i$ is a weighted average of internal  opinions of all nodes, with the weight of internal  opinion $s_j$ being  $ \omega_{ij}$, where  $j=1,2\ldots,n$.  Considering that  $\mathbf{\Omega}$  is a doubly stochastic and $s_i \in [0,1]$ for all $i=1,2\ldots,n$, we have that for every $i$, $z_i  \in [0,1]$. Moreover, $\sum^n_{i=1}  s_i=\sum^n_{i=1}z_i $, which means that the  total expressed opinion is equal to the total internal opinion, although the equilibrium  expressed opinion for a single node may be different from its  internal  opinion. This conservation law is independent  of the network structure. In this sense, we provide a novel interpretation and some properties  of  equilibrium  expressed opinion vector $\zz$ according to the  forest matrix. Our interpretation is different from previous ones, which are in terms of game theory~\cite{BiKlOr15} and electrical networks~\cite{GhSr14}, respectively.

\subsection{Measures for Conflict, Disagreement, Polarization, and Controversy}

In the FJ model, the opinions of nodes often do not reach consensus, leading to conflict, disagreement, polarization, and controversy, which  are common phenomena in social networks  and have been the subject of many recent works. Below we survey  some quantitative measures of these phenomena based on the FJ  opinion formation model.

As known to us all, in the FJ model,  individuals  differ in their internal opinions and expressed
opinions. The extent of this difference can be measured by internal conflict defined as follows~\cite{ChLiDe18}.
\begin{definition}
	For a graph  $\calG= (V,E,w)$, its internal conflict $C_{\rm I}(\calG)$ is the sum of squares of
	the differences between internal and expressed opinions over all nodes:
	\begin{equation}\label{eq:dfn_Ci}
	C_{\rm I}(\calG) =  \sum_{i\in V}\left(z_{i}-s_{i}\right)^{2}.
	\end{equation}
\end{definition}

\begin{definition}~\cite{MuMuTs18, DaGoLe13}
	For a graph $\calG= (V,E,w)$, its  disagreement $D(\calG)$  is defined by
	\begin{equation}\label{eq:dfn_disagree}
	D(\calG) =   \sum\limits_{(i,j) \in E} w_{ij} (z_i-z_j)^2.
	\end{equation}
\end{definition}

Note that the  disagreement $D(\calG)$ is called external conflict of graph $\calG$ in~\cite{ChLiDe18}.  We also note that the FJ  model has been used to understand  the price of anarchy in society when individuals selfishly update their opinions with an aim  to minimize the stress they experience~\cite{BiKlOr15}.   The stress of a node $i$ is defined as $(z_i - s_i)^2 + \sum_{j \in \Theta_i} w_{ij}(z_i-z_j)^2$, while the sum of the stress for all nodes is $C_{\rm I}(\calG)+D(\calG)$, which is exactly the sum of internal conflict and external conflict defined above.

If the equilibrium  expressed opinions have an increased divergence, we say that opinion formation dynamics are polarizing. Intuitively, polarization should measure how  equilibrium  expressed opinions  deviate from their average. There are many ways to quantify polarization. We here choose the metric proposed  in~\cite{MuMuTs18} to measure polarization.

\begin{definition}
	For a graph  $\calG= (V,E,w)$,  let $ \bar{z} $ be the mean-centered equilibrium vector given by
	$ \bar{\zz} =z  - \frac{\zz{ \top}  \textbf{1}}{n} \textbf{1}$. Then the polarization $P(\calG)$ is defined to be:
	\begin{equation}\label{eq:dfn_contr}
	P(\calG) =  \sum\limits_{i \in V}\bar{z} _i^2 =\bar{\zz}^{\top} \bar{\zz}.
	\end{equation}
\end{definition}	
In addition to $P(\calG)$,  the  polarization can also  be measured by the controversy that quantifies
how much the expressed opinion varies across the individuals in the whole graph $\calG$.
\begin{definition}
	For a graph  $\calG= (V,E,w)$, the controversy $C(\calG)$ is the sum of the squares of the
	equilibrium expressed opinions:
	\begin{equation}\label{eq:dfn_contrCG}
	C(\calG) =  \sum\limits_{i \in V}  z _i^2 =\zz^\top \zz.
	\end{equation}
\end{definition}
In~\cite{MaTeTs17}, the quantity $C(\calG)/n$ is introduced as the polarization index.

There can be a tradeoff between disagreement and controversy~\cite{MuMuTs18}. The sum of the disagreement and controversy is called disagreement-controversy index, which is named polarization-disagreement index in~\cite{MuMuTs18}.
\begin{definition}
	For a graph  $\calG= (V,E,w)$, the  disagreement-controversy index $I_{\rm dc}(\calG)$ is the sum of the  disagreement  $D(\calG)$  and controversy  $C(\calG)$ :
	\begin{equation}\label{eq:dfn_DisCon}
	I_{\rm dc}(\calG) = D(\calG)+ C(\calG).
	\end{equation}
\end{definition}
It is easy to verify~\cite{ChLiDe18,MuMuTs18} that the  disagreement-controversy index $I_{\rm dc}(\calG)$ is equal to the inner product between
internal  opinion vector $ \vs$ and expressed opinion vector  $\zz$, that is,  $I_{\rm dc}(\calG)=\sum_{i=1}^n s_{i} z_{i}$.

Let $\bar{\vs} = \vs - \frac{\bm{1}^{\top} \vs}{n} \bm{1}$.  The above-mentioned quantities can be expressed in matrix-vector notation as stated in  Proposition~\ref{Prop}~\cite{ChLiDe18,MuMuTs18}.
\begin{proposition}\label{Prop}
	For a graph  $\calG= (V,E,w)$, the internal conflict $C_{\rm I}(\calG)$,  disagreement  $D(\calG)$,  polarization $P(\calG)$, controversy  $C(\calG)$, and disagreement-controversy index $I_{\rm dc}(\calG)$ can be  conveniently expressed in terms of quadratic forms as:
	\begin{equation}\label{eq:dfn_CiA}
	C_{\rm I}(\calG) =\zz^{\top} \LL^{2} \zz=\vs^{\top}(\II+\LL)^{-1} \LL^{2}(\II+\LL)^{-1} \vs,
	\end{equation}
	\begin{equation}\label{eq:dfn_disagreeA}
	D(\calG) = \zz^{\top} \LL \zz=\vs^{\top}(\II+\LL)^{-1} \LL(\II+\LL)^{-1} \vs,
	\end{equation}
	\begin{equation}\label{eq:dfn_contrA}
	P(\calG)  =\bar{\zz}^{\top} \bar{\zz}=\bar{\vs}^{\top}(\II+\LL)^{-1}(\II+\LL)^{-1} \bar{\vs},
	\end{equation}
	\begin{equation}\label{eq:dfn_contrCGA}
	C(\calG) =\zz^{\top} \zz=\vs^{\top}(\II+\LL)^{-1} (\II+\LL)^{-1}\vs,
	\end{equation}
	\begin{equation}\label{eq:dfn_DisConA}
	I_{\rm dc}(\calG) = \vs^{\top}\zz=\vs^{\top }(\II+\LL)^{-1} \vs.
	\end{equation}
\end{proposition}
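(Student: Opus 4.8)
The plan is to derive every identity from the single defining relation $\kh{\II+\LL}\zz=\vs$, equivalently $\zz=\kh{\II+\LL}^{-1}\vs$, together with the factorization $\LL=\sum_{e\in E}w_e\bb_e\bb_e^\top$ and two structural facts about $\kh{\II+\LL}^{-1}$: it is symmetric, and it fixes $\textbf{1}$ (since $\LL\textbf{1}=\textbf{0}$ gives $\kh{\II+\LL}\textbf{1}=\textbf{1}$). For each quantity I would first rewrite it as a quadratic form in $\zz$, and then substitute $\zz=\kh{\II+\LL}^{-1}\vs$ to obtain the stated form in $\vs$; symmetry of $\kh{\II+\LL}^{-1}$ makes the two inner factors transpose correctly.

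For the internal conflict, the first step is to rearrange the defining relation into $\zz-\vs=-\LL\zz$, so that $C_{\rm I}(\calG)=\mo{\zz-\vs}^2=\zz^\top\LL^\top\LL\zz=\zz^\top\LL^2\zz$, using that $\LL$ is symmetric. For the disagreement, I would use $\bb_e=\ee_i-\ee_j$ to write each edge term as $z_i-z_j=\bb_e^\top\zz$; summing gives $D(\calG)=\sum_{e\in E}w_e\kh{\bb_e^\top\zz}^2=\zz^\top\big(\sum_{e\in E}w_e\bb_e\bb_e^\top\big)\zz=\zz^\top\LL\zz$. Controversy is immediate from its definition, $C(\calG)=\zz^\top\zz$. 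For the disagreement-controversy index I would combine the previous two: $I_{\rm dc}(\calG)=D(\calG)+C(\calG)=\zz^\top\kh{\LL+\II}\zz=\zz^\top\kh{\II+\LL}\zz=\zz^\top\vs$, again invoking $\kh{\II+\LL}\zz=\vs$, and then substitute once more to reach $\vs^\top\kh{\II+\LL}^{-1}\vs$.

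The one step that is not purely mechanical is polarization, where the goal is to show that mean-centering commutes with $\kh{\II+\LL}^{-1}$, i.e. $\bar{\zz}=\kh{\II+\LL}^{-1}\bar{\vs}$. Here I would use two ingredients established earlier: the conservation law $\textbf{1}^\top\zz=\textbf{1}^\top\vs$ (a consequence of the double stochasticity of $\mathbf{\Omega}=\kh{\II+\LL}^{-1}$), and the invariance $\kh{\II+\LL}^{-1}\textbf{1}=\textbf{1}$. Applying $\kh{\II+\LL}^{-1}$ to $\bar{\vs}=\vs-\tfrac{\textbf{1}^\top\vs}{n}\textbf{1}$ and using these two facts yields $\kh{\II+\LL}^{-1}\bar{\vs}=\zz-\tfrac{\textbf{1}^\top\zz}{n}\textbf{1}=\bar{\zz}$, whence $P(\calG)=\bar{\zz}^\top\bar{\zz}=\bar{\vs}^\top\kh{\II+\LL}^{-1}\kh{\II+\LL}^{-1}\bar{\vs}$.

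The main (mild) obstacle is precisely this commutation identity for the polarization term, since it is the only place where the conservation law and the $\textbf{1}$-invariance are genuinely needed; every other identity reduces to the defining linear system and the Laplacian factorization, followed by a single substitution.
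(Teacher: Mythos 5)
Your proof is correct, and it takes the natural route that the paper itself implicitly relies on: the paper states Proposition~\ref{Prop} without proof (citing prior work), but all the ingredients you use --- the equilibrium relation $(\II+\LL)\zz=\vs$, the factorization $\LL=\sum_{e\in E}w_e\bb_e\bb_e^\top$, the identity $(\II+\LL)^{-1}\textbf{1}=\textbf{1}$, and the conservation law $\textbf{1}^\top\zz=\textbf{1}^\top\vs$ from double stochasticity of $\mathbf{\Omega}$ --- are exactly the facts the paper establishes in its preliminaries, and you deploy them correctly, including the only nontrivial step $\bar{\zz}=(\II+\LL)^{-1}\bar{\vs}$ needed for the polarization identity.
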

Notice that  matrices $\LL$ and $(\II+\LL)^{-1}$  have identical  eigenspaces, implying  that they commute, that is, $\LL(\II+\LL)^{-1}=(\II+\LL)^{-1}\LL$. Thus, we have
$$C_{\rm I}(\calG) = \bar{\vs}^{\top}(\II+\LL)^{-1} \LL^{2}(\II+\LL)^{-1}  \bar{\vs},$$
$$D(\calG) = \bar{\vs}^{\top}(\II+\LL)^{-1} \LL(\II+\LL)^{-1} \bar{\vs},$$
due to $\LL \textbf{1}=\textbf{0}$.

After expressing the quantities concerned, in what follows we will provide a fast algorithm evaluating these quantities.

\section{Fast Approximation Algorithm for Conflict, Disagreement, Polarization, and Controversy}

As shown in Proposition~\ref{Prop}, for internal conflict $C_{\rm I}(\calG)$,  disagreement  $D(\calG)$,  polarization $P(\calG)$, controversy  $C(\calG)$, and disagreement-controversy index $I_{\rm dc}(\calG)$,  exactly computing  them  needs to invert matrix $\LL+\II$, which takes $O(n^3)$  time. This is computationally impractical for large graphs.

In this section, we  develop  a fast  algorithm for approximately evaluating those interesting quantities in nearly linear time with respect of the number of edges in $\calG$. To achieve this goal, we first reduce the problem for  evaluating the above quantities  to computing the $\ell_2$  norm of different  vectors. Then, we  estimate the $\ell_2$ norm by applying linear system solvers~\cite{CoKyMiPaPeRaXu14,KySa16} in order to significantly  reduce the computational complexity.

According to  Proposition~\ref{Prop}, we can explicitly represent  the concerned quantities in $\ell_2$  norm of vectors as stated in Lemma~\ref{PropA}.
\begin{lemma}\label{PropA}
	For a graph  $\calG= (V,E,w)$, the internal conflict $C_{\rm I}(\calG)$,  disagreement  $D(\calG)$,  polarization $P(\calG)$, controversy  $C(\calG)$, and disagreement-controversy index $I_{\rm dc}(\calG)$ can be   expressed, respectively, in terms of  $\ell_2$  norm as:
	\begin{equation}\label{eq:dfn_CiA}
	C_{\rm I}(\calG) =\zz^{\top} \LL^{2} \zz=\| \LL (\II+\LL)^{-1} \vs\| ^2,
	\end{equation}
	\begin{equation}\label{eq:dfn_disagreeA}
	D(\calG) = \zz^{\top} \LL \zz=\| \WW^{1/2} \BB (\II+\LL)^{-1} \vs\| ^2,
	\end{equation}
	\begin{equation}\label{eq:dfn_contrA}
	P(\calG)  =\bar{\zz}^{\top} \bar{\zz}=\| (\II+\LL)^{-1} \bar{\vs}\| ^2,
	\end{equation}
	\begin{equation}\label{eq:dfn_contrCGA}
	C(\calG) =\zz^{\top} \zz=\| (\II+\LL)^{-1} \vs\| ^2,
	\end{equation}
	\begin{equation}\label{eq:dfn_DisConA}
	I_{\rm dc}(\calG) = \| \WW^{1/2} \BB (\II+\LL)^{-1} \vs\| ^2+\| (\II+\LL)^{-1} \vs\| ^2,
	\end{equation}
	where $\WW^{1/2}$ is a diagonal matrix defined as $\WW^{1/2}= {\rm diag}(\sqrt{w_1}, \sqrt{w_2}, \\ \sqrt{w_3}, \ldots, \sqrt{w_m})$.
\end{lemma}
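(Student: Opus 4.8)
The plan is to derive every identity directly from the quadratic-form expressions already established in Proposition~\ref{Prop}, using the elementary fact that whenever a symmetric positive semidefinite matrix $\MM$ factors as $\MM=\NN^{\top}\NN$, its quadratic form collapses to a squared $\ell_2$ norm: $\xx^{\top}\MM\xx=\xx^{\top}\NN^{\top}\NN\xx=\|\NN\xx\|^{2}$. So for each quantity it suffices to exhibit the right factorization of the matrix sandwiched between the two copies of $(\II+\LL)^{-1}\vs$ (or $(\II+\LL)^{-1}\bar{\vs}$), and then match it against the formula in the statement.

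First I would record the three symmetry/factorization facts that drive everything. Since $\LL$ is symmetric, so is its resolvent, giving $((\II+\LL)^{-1})^{\top}=(\II+\LL)^{-1}$; since $\LL$ is symmetric, $\LL^{2}=\LL^{\top}\LL$; and from the incidence representation $\LL=\BB^{\top}\WW\BB$ together with the fact that $\WW^{1/2}$ is diagonal, hence symmetric with $(\WW^{1/2})^{\top}\WW^{1/2}=\WW$, we get $\LL=(\WW^{1/2}\BB)^{\top}(\WW^{1/2}\BB)$. With these in hand the identities follow by pattern matching. For the controversy, $C(\calG)=\vs^{\top}(\II+\LL)^{-1}(\II+\LL)^{-1}\vs=\|(\II+\LL)^{-1}\vs\|^{2}$ by symmetry of the resolvent, and the polarization is the same with $\vs$ replaced by $\bar{\vs}$. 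For the internal conflict, substituting $\LL^{2}=\LL^{\top}\LL$ into its quadratic form turns it into $\|\LL(\II+\LL)^{-1}\vs\|^{2}$. For the disagreement, substituting $\LL=(\WW^{1/2}\BB)^{\top}(\WW^{1/2}\BB)$ turns $\vs^{\top}(\II+\LL)^{-1}\LL(\II+\LL)^{-1}\vs$ into $\|\WW^{1/2}\BB(\II+\LL)^{-1}\vs\|^{2}$.

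Finally, the disagreement-controversy index needs no new factorization: by its definition $I_{\rm dc}(\calG)=D(\calG)+C(\calG)$, so summing the two norm expressions just obtained gives the stated two-term formula; as a consistency check one verifies that this sum equals the single quadratic form $\vs^{\top}(\II+\LL)^{-1}\vs$ of Proposition~\ref{Prop}, using $(\II+\LL)^{-1}(\LL+\II)(\II+\LL)^{-1}=(\II+\LL)^{-1}$.

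I do not expect a serious obstacle, since the content reduces to choosing the correct factorization for each matrix. The only place demanding a moment's care is the disagreement term, where one must resist factoring $\LL$ through its matrix square root (which exists by symmetry but is useless computationally) and instead route it through the sparse incidence matrix $\WW^{1/2}\BB$; this is precisely the choice that later makes the resulting norm amenable to Laplacian solvers. A secondary point worth spelling out, if one wants the polarization statement to read off cleanly from the definition $P(\calG)=\bar{\zz}^{\top}\bar{\zz}$ rather than from Proposition~\ref{Prop}, is the identity $\bar{\zz}=(\II+\LL)^{-1}\bar{\vs}$, which follows from $\zz=(\II+\LL)^{-1}\vs$ and $(\II+\LL)^{-1}\textbf{1}=\textbf{1}$, the latter because $\LL\textbf{1}=\textbf{0}$.
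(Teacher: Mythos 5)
Your proposal is correct and matches the paper's (implicit) argument: the paper derives Lemma~\ref{PropA} directly from Proposition~\ref{Prop} by exactly the factorizations you use, namely $\LL^{2}=\LL^{\top}\LL$, $\LL=(\WW^{1/2}\BB)^{\top}(\WW^{1/2}\BB)$, and the symmetry of $(\II+\LL)^{-1}$, with $I_{\rm dc}(\calG)=D(\calG)+C(\calG)$ handled by summation. Your consistency check via $(\II+\LL)^{-1}(\LL+\II)(\II+\LL)^{-1}=(\II+\LL)^{-1}$ and the observation $\bar{\zz}=(\II+\LL)^{-1}\bar{\vs}$ are both sound additions, not deviations.
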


Having reduced the computation of the relevant quantities to evaluating $\ell_2$ norms of  some vectors  in $\mathbb{R}^n$ or $\mathbb{R}^m$, we continue to compute the $\ell_2$ norms. However, directly calculating the $\ell_2$ norms  does not help to reduce the computational cost, since it still requires  inverting   matrix $\II+\LL$. In order to reduce  computational time, we resort to the efficient linear system solvers~\cite{CoKyMiPaPeRaXu14}, which  avoids the inverse operation by solving a system of equations~\cite{KySa16}.

\begin{lemma}\label{ST}
	There is a nearly linear time solver $\yy=\textsc{Solve}\left(\TT,\xx,\delta\right)$, which takes   an $n \times n$ positive semi-definite matrix $\TT$ with $m$ non-zero entries, a column vector $\xx$, and an accuracy parameter $\delta$, and returns   a column vector $\yy$ satisfying
	$\|\yy-\TT^{\dagger}\xx\|_{\TT}\le\delta\|\TT^{\dagger}\xx\|_{\TT}$, where $\|\boldsymbol{\mathit{v}}\|_{\TT}=\sqrt{\boldsymbol{\mathit{v}}^\top \TT \boldsymbol{\mathit{v}}}$ and $\TT^{\dagger}$ is the pseudo-inverse of $\TT$. The expected time for performing this solver is $O\left(m\log^{3}{n}\log\left(\frac{1}{\delta}\right)\right)$.
\end{lemma}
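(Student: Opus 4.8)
The plan is to treat this as a direct invocation of the now-standard machinery of nearly-linear-time solvers for symmetric diagonally dominant (SDD) linear systems, established in the cited works~\cite{CoKyMiPaPeRaXu14,KySa16}. In every application in this paper the matrix passed to the solver is $\TT=\II+\LL$, which is SDD and in fact positive definite (its smallest eigenvalue is at least $1$ since $\LL\succeq 0$), so the guarantee applies directly; more generally, any PSD SDD matrix can be reduced in linear time to a graph Laplacian, at the cost of a constant-factor increase in the number of nonzeros, so it suffices to handle the Laplacian case. I would recapitulate the two ingredients behind such solvers: (i) the construction of a sparse, easily invertible \emph{preconditioner} $\MM$ that is a constant-factor spectral approximation of $\TT$, and (ii) an \emph{iterative refinement} loop that uses $\MM$ to drive the error down geometrically.

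For the preconditioner I would follow the approximate-Gaussian-elimination route of~\cite{KySa16}. One eliminates the vertices of the underlying graph one at a time; each elimination creates a fill-in clique among the neighbors of the removed vertex, and instead of storing this clique exactly one replaces it by a random sparse sample with the same expected Laplacian. This produces, in $O(m\log^{3}n)$ expected time, a sparse lower-triangular factor $\mathcal{U}$ with $O(m\log^{3}n)$ nonzeros such that, with high probability,
\begin{equation}\nonumber
\tfrac{1}{2}\,\mathcal{U}\mathcal{U}^{\top}\preceq \TT \preceq 2\,\mathcal{U}\mathcal{U}^{\top},
\end{equation}
so that $\MM=\mathcal{U}\mathcal{U}^{\top}$ has constant relative condition number against $\TT$. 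Alternatively one may use the recursive spectral-sparsifier preconditioning chains of~\cite{CoKyMiPaPeRaXu14}, which yield the same style of bound.

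The solve itself is then a preconditioned Chebyshev (or conjugate-gradient) iteration. Because the relative condition number of $\MM$ against $\TT$ is a constant, the error measured in the $\TT$-norm contracts by a constant factor per iteration, so after $O(\log(1/\delta))$ iterations the iterate $\yy$ satisfies $\|\yy-\TT^{\dagger}\xx\|_{\TT}\le\delta\|\TT^{\dagger}\xx\|_{\TT}$; the pseudo-inverse enters only through the null space of $\TT$, which we project out at the start and which is trivial when $\TT=\II+\LL$ is invertible. Each iteration costs one multiplication by $\TT$, which is $O(m)$, and one application of $\MM^{-1}$, i.e. a forward/back substitution through the sparse factor $\mathcal{U}$, which is $O(m\log^{3}n)$. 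Summing the construction cost with the per-iteration cost over $O(\log(1/\delta))$ iterations gives the total expected running time $O(m\log^{3}n\log(1/\delta))$, as claimed.

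\textbf{Main obstacle.} The genuinely hard step is establishing that the randomly sparsified factor remains a spectral approximation of $\TT$ with high probability while keeping the total fill-in and work nearly linear. This requires a matrix-concentration argument --- phrasing the sequence of random clique replacements as a matrix-valued martingale and bounding its deviation via a matrix Freedman inequality --- together with a careful accounting of the expected number of nonzeros generated during elimination. This is precisely the technical core of~\cite{KySa16} (respectively the sparsifier analysis underlying~\cite{CoKyMiPaPeRaXu14}), so in our exposition we would invoke their result rather than reproduce the concentration analysis.
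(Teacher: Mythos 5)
The paper gives no proof of this lemma at all---it is stated as an imported result, cited directly to \cite{CoKyMiPaPeRaXu14,KySa16}---and your proposal ultimately does the same thing, deferring the technical core (the matrix-concentration analysis of the sparsified elimination) to those references while adding a correct expository sketch of how such solvers work. So your approach is essentially the paper's, and the additional detail you supply (preconditioner construction, iterative refinement, the reduction to the positive definite case $\TT=\II+\LL$) is accurate and consistent with the stated $O\left(m\log^{3}n\log\left(\frac{1}{\delta}\right)\right)$ bound.
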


Lemma~\ref{ST} can significantly reduce  the computational time for evaluating those quantities in the form of $(\II+\LL)^{-1}\xx$ with an ideal approximation guarantee. For example, as will be shown below, $\pp = \textsc{Solve}(\II + \LL, \vs, \delta)$ is a good approximation of the expressed opinion vector  $\zz$.

We next  use  Lemma~\ref{ST} to obtain  approximations for the quantities concerned. Prior to this, we introduce some notations and their properties. Let $a\geq 0$ and $b\geq 0$ be two nonnegative scalars. 	We say $a$ is an $\eps$-approximation ($0\leq \eps \leq1/2$) of $b$ if $(1-\eps) a \leq b \leq (1+\eps) a$, denoted by $a \approx_{\eps} b$.
The $\eps$-approximation has the following basic property: For nonnegative scalars $a$, $b$, $c$, and $d $, if $a \approx_\eps b$ and $c \approx_\eps d$, then $a + c \approx_\eps b + d$. For two matrices $\XX$ and $\YY$, we write $\XX \preceq \YY$ if $\YY - \XX$ is positive semidefinite, that is, $\xx^\top \XX \xx \leq \xx^\top \YY \xx$ holds  for every real vector $\xx$. Then, we have $\II \preceq \II + \LL$,  $\LL \preceq \II + \LL$, $ \II + \LL \preceq (nw_{\rm max}+1) \II$, and $ \frac{1}{n\,w_{\rm max}}\LL \preceq \II$. These relations are useful to prove the following lemmas, which we will apply to obtain $\eps$-approximation for  the quantities we care about.

\begin{lemma}\label{lm1}
	Given an undirected weighted graph $\calG=(V,E,w)$ with  each edge weight in the interval $[w_{\rm min}, w_{\rm max}]$,  the Laplacian matrix $\LL$, and a  parameter $\epsilon \in (0, \frac{1}{2})$, let $\xx$ be an arbitrary vector, and let $\yy = \textsc{Solve}\kh{\II + \LL, \xx, \delta}$, where
	\begin{align*}
	\delta \leq \frac{\epsilon}{3\sqrt{nw_{\rm max}+1}}.
	\end{align*}
	Then, the following relation holds:
	\begin{align*}
	(1-\epsilon ) \mo{(\II+\LL)^{-1}\xx}^2 \leq \mo{\yy}^2 \leq (1+\epsilon ) \mo{(\II+\LL)^{-1}\xx}^2.
	\end{align*}
\end{lemma}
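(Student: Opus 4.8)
The plan is to treat the solver output $\yy$ as a small perturbation of the exact solution and to control that perturbation by translating the solver's $\TT$-weighted error guarantee into an ordinary $\ell_2$ bound. First I would note that, since every eigenvalue of $\II+\LL$ is at least $1$, the matrix $\TT=\II+\LL$ is positive definite, so its pseudo-inverse coincides with its inverse and $\TT^{\dagger}\xx=\kh{\II+\LL}^{-1}\xx$. Writing $\qq=\kh{\II+\LL}^{-1}\xx$ for this exact solution, Lemma~\ref{ST} specializes to $\mo{\yy-\qq}_{\II+\LL}\le\delta\,\mo{\qq}_{\II+\LL}$, where $\mo{\boldsymbol{\mathit{v}}}_{\II+\LL}=\sqrt{\boldsymbol{\mathit{v}}^\top\kh{\II+\LL}\boldsymbol{\mathit{v}}}$.

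The key step is to sandwich the $\kh{\II+\LL}$-norm between multiples of the $\ell_2$-norm. Using the two matrix relations recorded just before the lemma, namely $\II\preceq\II+\LL$ and $\II+\LL\preceq\kh{nw_{\rm max}+1}\II$, I obtain for every vector $\boldsymbol{\mathit{v}}$ the two-sided bound $\mo{\boldsymbol{\mathit{v}}}\le\mo{\boldsymbol{\mathit{v}}}_{\II+\LL}\le\sqrt{nw_{\rm max}+1}\,\mo{\boldsymbol{\mathit{v}}}$. Applying the left inequality to the error $\yy-\qq$ and the right inequality to $\qq$, and then inserting the solver guarantee, I get $\mo{\yy-\qq}\le\mo{\yy-\qq}_{\II+\LL}\le\delta\,\mo{\qq}_{\II+\LL}\le\delta\sqrt{nw_{\rm max}+1}\,\mo{\qq}$. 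The stated choice $\delta\le\epsilon/\kh{3\sqrt{nw_{\rm max}+1}}$ is engineered precisely so the $\sqrt{nw_{\rm max}+1}$ factor cancels, leaving $\mo{\yy-\qq}\le\frac{\epsilon}{3}\,\mo{\qq}$.

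From here the argument is routine. The reverse triangle inequality $\bigl|\,\mo{\yy}-\mo{\qq}\,\bigr|\le\mo{\yy-\qq}\le\frac{\epsilon}{3}\mo{\qq}$ yields $\kh{1-\frac{\epsilon}{3}}\mo{\qq}\le\mo{\yy}\le\kh{1+\frac{\epsilon}{3}}\mo{\qq}$, and squaring these nonnegative quantities gives $\kh{1-\frac{\epsilon}{3}}^2\mo{\qq}^2\le\mo{\yy}^2\le\kh{1+\frac{\epsilon}{3}}^2\mo{\qq}^2$. It then remains to check the two elementary inequalities $\kh{1+\frac{\epsilon}{3}}^2\le1+\epsilon$ and $\kh{1-\frac{\epsilon}{3}}^2\ge1-\epsilon$ on the range $\epsilon\in\kh{0,\frac12}$: the first reduces to $\frac{\epsilon^2}{9}\le\frac{\epsilon}{3}$, i.e. $\epsilon\le3$, and the second holds automatically since $\frac{\epsilon^2}{9}\ge0\ge-\frac{\epsilon}{3}$, so both are valid throughout. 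Recalling $\qq=\kh{\II+\LL}^{-1}\xx$ then gives exactly the claimed bounds on $\mo{\yy}^2$.

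The only genuinely delicate point is the norm conversion in the second paragraph: the solver controls the error in the $\TT$-weighted norm, whereas the statement concerns the plain $\ell_2$-norm, so the whole argument hinges on the spectral sandwich $\II\preceq\II+\LL\preceq\kh{nw_{\rm max}+1}\II$ and on matching the slack factor $3$ in $\delta$ against the loss incurred when passing from a linear $\frac{\epsilon}{3}$ relative error to the squared $\epsilon$ guarantee. Everything else is bookkeeping with the triangle inequality.
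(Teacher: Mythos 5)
Your proof is correct and follows essentially the same route as the paper's: both convert the solver's $(\II+\LL)$-norm guarantee into an $\ell_2$ bound via the spectral sandwich $\II\preceq\II+\LL\preceq(nw_{\rm max}+1)\II$, apply the reverse triangle inequality to obtain a relative error of $\epsilon/3$, and then square and relax $(1\pm\epsilon/3)^2$ to $1\pm\epsilon$. Your explicit verification of the final elementary inequalities is slightly more careful than the paper's appeal to $0<\epsilon<\tfrac12$, but the argument is the same.
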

\begin{proof}
	According to Lemma~\ref{ST}, we have
	\begin{align*}
	\mo{\yy - (\II+\LL)^{-1}\xx}^2_{\II+\LL} \leq \delta^2 \mo{(\II+\LL)^{-1}\xx}_{\II+\LL}^2.
	\end{align*}	
	The term on the left-hand side (lhs) can be bounded as
	\begin{align*}
	\mo{\yy - (\II+\LL)^{-1}\xx}_{\II+\LL}^2 	\geq & \mo{\yy - (\II+\LL)^{-1}\xx}^2 \\
	\geq & \len{\mo{\yy} - \mo{(\II+\LL)^{-1}\xx}}^2,
	\end{align*}
	while the term on the right-hand side (rhs) can be bounded as
	\begin{align*}
	\mo{(\II+\LL)^{-1}\xx}^2_{\II+\LL} \leq (nw_{\rm max}+1)\mo{(\II+\LL)^{-1}\xx}^2.
	\end{align*}
	Combining the above-obtained relations, we have
	\begin{align*}
	&\quad \len{\mo{\yy} - \mo{(\II+\LL)^{-1}\xx}}^2 \\
&\leq \delta^2 (nw_{\rm max}+1)\mo{(\II+\LL)^{-1}\xx}^2,
	\end{align*}
	which implies
	\begin{align*}
	\frac{\len{\mo{\yy} - \mo{(\II+\LL)^{-1}\xx}}}{\mo{(\II+\LL)^{-1}\xx}} \leq \sqrt{\delta^2 (nw_{\rm max}+1)} \leq \frac{\epsilon}{3}
	\end{align*}
	and
	\begin{align*}
	(1-\frac{\epsilon}{3} )^2 \mo{(\II+\LL)^{-1}\xx}^2 & \leq \mo{\yy}^2\\ & \leq (1+\frac{\epsilon}{3} )^2 \mo{(\II+\LL)^{-1}\xx}^2.
	\end{align*}
	Considering $0 < \epsilon < \frac{1}{2}$, we get
	\begin{align*}
	(1-\epsilon ) \mo{(\II+\LL)^{-1}\xx}^2 &\leq \mo{\yy}^2\\ & \leq (1+\epsilon ) \mo{(\II+\LL)^{-1}\xx}^2,
	\end{align*}
	which completes the proof.
\end{proof}

\begin{lemma}\label{lm2a}
	Given an undirected weighted graph $\calG=(V,E,w)$ with  each edge weight in the interval $[w_{\rm min}, w_{\rm max}]$,  the incident matrix $\BB$,  diagonal edge weight matrix $\WW$, Laplacian matrix $\LL$, and a  parameter $\epsilon \in (0, \frac{1}{2})$, let $\vs = (s_1, s_2, \ldots, s_n)^{\top}$ be the internal opinion vector with each $s_i \in [0,1]$ for $i=1, 2, \ldots, n$,  $\bar{\vs} = \vs - \frac{\bm{1}^{\top} \vs}{n} \bm{1}= (\bar{s}_1, \bar{s}_2, \ldots, \bar{s}_n)^{\top}$ be the mean-centered internal opinion vector, and  let $\qq = \textsc{Solve}\kh{\II + \LL, \bar{\vs}, \delta}$, where
	\begin{align*}
	\delta \leq \frac{\epsilon \mo{\bar{\vs}} }{3n (nw_{\rm max}+1)} \sqrt{\frac{ w_{\rm min} }{n(nw_{\rm max}+1)}}.
	\end{align*}
	Then, the following relation holds:
	\begin{align*}
	&(1-\epsilon ) \mo{\WW^{1/2}\BB(\II+\LL)^{-1}\bar{\vs}}^2 \\
	\leq & \mo{\WW^{1/2}\BB\qq}^2\leq (1+\epsilon ) \mo{\WW^{1/2}\BB(\II+\LL)^{-1}\bar{\vs}}^2.
	\end{align*}
\end{lemma}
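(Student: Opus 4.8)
The plan is to run the argument of Lemma~\ref{lm1} in the $\LL$-seminorm rather than in the Euclidean norm. The key identity is $\mo{\WW^{1/2}\BB\xx}^2 = \xx^\top\BB^\top\WW\BB\xx = \xx^\top\LL\xx = \mo{\xx}_\LL^2$ for every $\xx$. Writing $\pp = \kh{\II+\LL}^{-1}\bar{\vs}$ and recalling that $\qq$ approximates $\pp$, the claim becomes that $\mo{\qq}_\LL^2$ is an $\epsilon$-approximation of $\mo{\pp}_\LL^2$, which I would establish by controlling the relative error $\len{\mo{\qq}_\LL-\mo{\pp}_\LL}/\mo{\pp}_\LL$.

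First I would apply Lemma~\ref{ST} to get $\mo{\qq-\pp}_{\II+\LL}\le\delta\mo{\pp}_{\II+\LL}$. Since $\WW^{1/2}\BB\qq$ and $\WW^{1/2}\BB\pp$ are genuine vectors in $\rea^m$, the reverse triangle inequality together with $\LL\preceq\II+\LL$ gives
\[
\len{\mo{\qq}_\LL-\mo{\pp}_\LL}\le\mo{\qq-\pp}_\LL\le\mo{\qq-\pp}_{\II+\LL}\le\delta\mo{\pp}_{\II+\LL}.
\]
The right-hand side I would bound by a computable quantity: using $\II+\LL\preceq(nw_{\rm max}+1)\II$ and $\kh{\II+\LL}^{-1}\preceq\II$ yields $\mo{\pp}_{\II+\LL}\le\sqrt{nw_{\rm max}+1}\,\mo{\pp}\le\sqrt{nw_{\rm max}+1}\,\mo{\bar{\vs}}$.

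The crux is a matching lower bound on $\mo{\pp}_\LL$, and this is where the mean-centering hypothesis is indispensable. Because $\bm{1}^\top\bar{\vs}=0$ and $\bm{1}$ spans the kernel of $\LL$ (and is fixed by $\kh{\II+\LL}^{-1}$), the vector $\pp$ has no component along $\uu_n=\bm{1}/\sqrt{n}$. Expanding $\bar{\vs}=\sum_{i=1}^{n-1}c_i\uu_i$ in the eigenbasis, I would write
\[
\mo{\pp}_\LL^2=\pp^\top\LL\pp=\sum_{i=1}^{n-1}\frac{\lambda_i c_i^2}{(1+\lambda_i)^2}\ge\frac{\lambda_{\min}}{(1+\lambda_{\max})^2}\sum_{i=1}^{n-1}c_i^2=\frac{\lambda_{\min}}{(1+\lambda_{\max})^2}\mo{\bar{\vs}}^2,
\]
using $\lambda_i\ge\lambda_{\min}$ in each numerator and $(1+\lambda_i)^2\le(1+\lambda_{\max})^2$ in each denominator. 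Substituting $\lambda_{\min}\ge w_{\rm min}/n^2$ and $\lambda_{\max}\le nw_{\rm max}$ gives $\mo{\pp}_\LL\ge\frac{\sqrt{w_{\rm min}}}{n(nw_{\rm max}+1)}\mo{\bar{\vs}}$.

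Dividing the upper bound by this lower bound, $\len{\mo{\qq}_\LL-\mo{\pp}_\LL}/\mo{\pp}_\LL$ is at most $\delta$ times an explicit factor in $n$, $w_{\rm max}$, $w_{\rm min}$ and $\mo{\bar{\vs}}$; the stated choice of $\delta$ (which is conservative, since $\mo{\bar{\vs}}\le\sqrt{n}$ as each $s_i\in[0,1]$) forces this ratio below $\epsilon/3$. Squaring then places $\mo{\qq}_\LL^2$ in $(1\pm\epsilon/3)^2\mo{\pp}_\LL^2$, and for $0<\epsilon<\tfrac12$ these factors lie inside $[1-\epsilon,1+\epsilon]$, exactly as in Lemma~\ref{lm1}. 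The main obstacle is precisely the lower bound on $\mo{\pp}_\LL$: since $\LL$ is only positive semidefinite, its Rayleigh quotient admits no positive lower bound on all of $\rea^n$, so one must exploit mean-centering to confine $\pp$ to the range of $\LL$, where $\lambda_{\min}>0$ governs the quotient. The remaining steps are the routine seminorm estimates above.
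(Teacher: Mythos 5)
Your proposal is correct and follows essentially the same route as the paper's proof: the solver guarantee from Lemma~\ref{ST}, the chain $\len{\mo{\qq}_{\LL}-\mo{\pp}_{\LL}}\le\mo{\qq-\pp}_{\LL}\le\mo{\qq-\pp}_{\II+\LL}$ via $\LL\preceq\II+\LL$ and the reverse triangle inequality, the upper bound $\mo{\pp}_{\II+\LL}^2\le(nw_{\rm max}+1)\mo{\pp}^2$, the spectral lower bound $\mo{\pp}_{\LL}^2\ge\frac{w_{\rm min}}{n^2(nw_{\rm max}+1)^2}\mo{\bar{\vs}}^2$ exploiting $\bm{1}^\top\bar{\vs}=0$, and finally squaring with $(1\pm\epsilon/3)^2\subseteq[1-\epsilon,1+\epsilon]$. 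The only cosmetic differences are that you write out the eigenbasis computation behind the lower bound, which the paper merely asserts, and that you bound $\mo{\pp}\le\mo{\bar{\vs}}$ rather than $\mo{\pp}^2\le n$, which is marginally tighter and still compatible with the stated $\delta$.
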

\begin{proof}
	By Lemma~\ref{ST}, we have
	\begin{align*}
	\mo{\qq - (\II+\LL)^{-1}\bar{\vs}}^2_{\II+\LL} \leq \delta^2\mo{(\II+\LL)^{-1} \bar{\vs}}^2_{\II+\LL}.
	\end{align*}	
	The lhs can be bounded as
	\begin{align*}
	& \mo{\qq - (\II+\LL)^{-1}\bar {\vs}}^2_{\II+\LL} \\
	\geq & \mo{\qq - (\II+\LL)^{-1}\bar{\vs}}^2_{\LL}\\ = & \mo{\WW^{1/2}\BB\qq - \WW^{1/2}\BB(\II+\LL)^{-1}\bar{\vs}}^2 \\
	\geq & \len{\mo{\WW^{1/2}\BB\qq} - \mo{\WW^{1/2}\BB(\II+\LL)^{-1}\bar{\vs}}}^2,
	\end{align*}
	while  rhs is bounded as
	\begin{align*}
	\mo{(\II+\LL)^{-1}\bar{\vs}}_{\II+\LL}^2 &\leq (nw_{\rm max}+1)\mo{(\II+\LL)^{-1}\bar{\vs}}^2\\
	&\leq  n(n w_{\rm max}+1),
	\end{align*}
	where $\bar{\vs}_i\leq 1$, $i=1, 2, \ldots, n$, is used.
	Combining the above-obtained results gives
	\begin{align*}
	&\quad \len{\mo{\WW^{1/2}\BB\qq} - \mo{\WW^{1/2}\BB(\II+\LL)^{-1}\bar{\vs}}}^2\\ &\leq \delta^2 n(n w_{\rm max}+1).
	\end{align*}
	On the other hand, since $\bar{\vs}^{\top}\bm{1}=0$ due to $\bar{\vs} = \vs - \frac{\bm{1}^{\top} \vs}{n} \bm{1}$,
	\begin{align*}
	\mo{\WW^{1/2}\BB(\II+\LL)^{-1}\bar{\vs}}^2
	\geq  \frac{w_{\rm min}}{n^2(n w_{\rm max}+1)^2}\mo{\bar{\vs}}^2.
	\end{align*}
	Thus, one has
	\begin{align*}
	&\frac{\len{\mo{\WW^{1/2}\BB\qq} - \mo{\WW^{1/2}\BB(\II+\LL)^{-1}\bar{\vs}}}}{\mo{\WW^{1/2}\BB(\II+\LL)^{-1}\bar{\vs}}} \\
	\leq & \sqrt{\frac{\delta^2 n^3(nw_{\rm max}+1)^3}{w_{\rm min} \mo{\bar{\vs}}^2} } \leq \frac{\epsilon}{3}.
	\end{align*}
	In other words,
	\begin{align*}
	&(1-\frac{\epsilon}{3} )^2 \mo{\WW^{1/2}\BB(\II+\LL)^{-1}\bar{\vs}}^2 \\
	\leq& \mo{\WW^{1/2}\BB\qq}^2 \leq (1+\frac{\epsilon}{3} )^2 \mo{\WW^{1/2}\BB(\II+\LL)^{-1}\bar{\vs}}^2.
	\end{align*}
	Because $0 < \epsilon < \frac{1}{2}$, we obtain
	\begin{align*}
	&(1-\epsilon ) \mo{\WW^{1/2}\BB(\II+\LL)^{-1}\bar{\vs}}^2 \\
	\leq& \mo{\WW^{1/2}\BB\qq}^2 \leq (1+\epsilon ) \mo{\WW^{1/2}\BB(\II+\LL)^{-1}\bar{\vs}}^2,
	\end{align*}
	which completes the proof.
\end{proof}

\begin{lemma}\label{lm3}
	Given an undirected weighted graph $\calG=(V,E,w)$ with  each edge weight in the interval $[w_{\rm min}, w_{\rm max}]$,  Laplacian matrix $\LL$, and a  parameter $\epsilon \in (0, \frac{1}{2})$, let $\vs = (s_1, s_2, \ldots, s_n)^{\top}$ be the internal opinion vector with each $s_i \in [0,1]$ for $i=1, 2, \ldots, n$,  $\bar{\vs} = \vs - \frac{\bm{1}^{\top} \vs}{n} \bm{1}= (\bar{s}_1, \bar{s}_2, \ldots, \bar{s}_n)^{\top}$ be the mean-centered internal opinion vector,  $\qq = \textsc{Solve}\kh{\II + \LL, \bar{\vs}, \delta}$, where
	\begin{align*}
	\delta \leq \frac{\epsilon w_{\rm min}\mo{\bar{\vs}}}{3 w_{\rm max} n^3 (nw_{\rm max}+1) \sqrt{n}}.
	\end{align*}
	Then, the following relation holds:
	\begin{align*}
	(1-\epsilon ) \mo{\LL(\II+\LL)^{-1}\bar{\vs}}^2 &\leq \mo{\LL\qq}^2\\
 & \leq (1+\epsilon ) \mo{\LL(\II+\LL)^{-1}\bar{\vs}}^2.
	\end{align*}
\end{lemma}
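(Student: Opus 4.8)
The plan is to follow exactly the template of the proofs of Lemma~\ref{lm1} and Lemma~\ref{lm2a}: start from the guarantee of \textsc{Solve}, bound the left-hand side of that guarantee from below by a constant multiple of $\len{\mo{\LL\qq}-\mo{\LL(\II+\LL)^{-1}\bar{\vs}}}^2$ and its right-hand side from above by an explicit multiple of $\mo{\bar{\vs}}^2$, and finally divide by a lower bound on the target quantity $\mo{\LL(\II+\LL)^{-1}\bar{\vs}}$ to convert an absolute error into a relative one. First I would apply Lemma~\ref{ST} with $\TT=\II+\LL$ and $\xx=\bar{\vs}$, giving
\[
\mo{\qq-(\II+\LL)^{-1}\bar{\vs}}^2_{\II+\LL}\le \delta^2\,\mo{(\II+\LL)^{-1}\bar{\vs}}^2_{\II+\LL}.
\]

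The genuinely new ingredient, compared with Lemma~\ref{lm2a}, is extracting the operator $\LL$ \emph{twice} on the left-hand side. Since every eigenvalue of $\LL$ satisfies $\lambda_i\le n\,w_{\rm max}$, the eigenvalue inequality $\lambda_i^2/(n w_{\rm max})\le \lambda_i\le 1+\lambda_i$ gives the matrix relation $\tfrac{1}{n w_{\rm max}}\LL^2\preceq \LL\preceq \II+\LL$. Writing $\boldsymbol{v}=\qq-(\II+\LL)^{-1}\bar{\vs}$, this yields
\begin{align*}
\mo{\boldsymbol{v}}^2_{\II+\LL} &\ge \tfrac{1}{n\,w_{\rm max}}\mo{\LL\boldsymbol{v}}^2 = \tfrac{1}{n\,w_{\rm max}}\mo{\LL\qq-\LL(\II+\LL)^{-1}\bar{\vs}}^2\\
&\ge \tfrac{1}{n\,w_{\rm max}}\len{\mo{\LL\qq}-\mo{\LL(\II+\LL)^{-1}\bar{\vs}}}^2,
\end{align*}
the last step being the reverse triangle inequality. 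For the right-hand side I would proceed as before, using $\mo{\boldsymbol{v}}^2_{\II+\LL}\le (n w_{\rm max}+1)\mo{\boldsymbol{v}}^2$ together with $\mo{(\II+\LL)^{-1}\bar{\vs}}^2\le\mo{\bar{\vs}}^2\le n$ (each $\bar{s}_i\in[-1,1]$), so that the right-hand side of the \textsc{Solve} guarantee is at most $\delta^2 n(n w_{\rm max}+1)$. Combining the two bounds produces the absolute-error estimate $\len{\mo{\LL\qq}-\mo{\LL(\II+\LL)^{-1}\bar{\vs}}}^2\le \delta^2 n^2 w_{\rm max}(n w_{\rm max}+1)$.

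The step I expect to be the crux is the \emph{lower} bound on the target $\mo{\LL(\II+\LL)^{-1}\bar{\vs}}$ that is needed to normalize the absolute error. Here I would use that $\bar{\vs}$ is orthogonal to $\bm{1}$, the eigenvector associated with the zero eigenvalue, so $(\II+\LL)^{-1}\bar{\vs}$ lies in the span of the eigenvectors with $\lambda_i\ge \lambda_{\min}$. On that subspace the commuting operator $\LL(\II+\LL)^{-1}$ acts with eigenvalues $\lambda_i/(1+\lambda_i)$, which are increasing in $\lambda_i$; hence
\begin{align*}
\mo{\LL(\II+\LL)^{-1}\bar{\vs}} &\ge \frac{\lambda_{\min}}{1+\lambda_{\min}}\mo{\bar{\vs}}\\
&\ge \frac{w_{\rm min}/n^2}{n w_{\rm max}+1}\mo{\bar{\vs}} = \frac{w_{\rm min}}{n^2(n w_{\rm max}+1)}\mo{\bar{\vs}},
\end{align*}
using $\lambda_{\min}\ge w_{\rm min}/n^2$ and $1+\lambda_{\min}\le 1+\lambda_{\max}\le n w_{\rm max}+1$.

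Finally I would divide the absolute-error bound by this lower bound and substitute the stated choice of $\delta$, which is arranged precisely so that the relative error is at most $\epsilon/3$, i.e.
\begin{align*}
(1-\tfrac{\epsilon}{3})^2\mo{\LL(\II+\LL)^{-1}\bar{\vs}}^2 &\le \mo{\LL\qq}^2\\
&\le (1+\tfrac{\epsilon}{3})^2\mo{\LL(\II+\LL)^{-1}\bar{\vs}}^2,
\end{align*}
after which the assumption $0<\epsilon<\tfrac12$ loosens $(1\pm\tfrac{\epsilon}{3})^2$ into $(1\pm\epsilon)$, exactly as in the previous two lemmas. The only substantive work beyond Lemma~\ref{lm2a} is the spectral handling of the squared factor $\LL^2$ and the $\lambda_{\min}$-based denominator estimate; the rest is bookkeeping of the constants that feed into the bound on $\delta$.
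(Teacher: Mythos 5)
Your proof is correct and follows essentially the same route as the paper's: apply the \textsc{Solve} guarantee, extract $\LL$ on the left-hand side via a spectral bound and the reverse triangle inequality, bound the right-hand side by $\delta^2 n(nw_{\rm max}+1)$, and normalize by the lower bound $\mo{\LL(\II+\LL)^{-1}\bar{\vs}}\ge \frac{w_{\rm min}}{n^2(nw_{\rm max}+1)}\mo{\bar{\vs}}$ obtained from $\bar{\vs}\perp\bm{1}$ and $\lambda_{\min}\ge w_{\rm min}/n^2$. The only difference is cosmetic: you extract $\LL$ via $\tfrac{1}{nw_{\rm max}}\LL^2\preceq\LL\preceq\II+\LL$ whereas the paper uses $\mo{\cdot}_{\II+\LL}\ge\mo{\cdot}$ followed by $\LL^2\preceq(nw_{\rm max})^2\II$, so you gain a factor of $nw_{\rm max}$ in the constant but change nothing structurally.
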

\begin{proof}
	Making use of Lemma~\ref{ST}, we obtain
	\begin{align*}
	\mo{\qq - (\II+\LL)^{-1}\bar{\vs}}^2_{\II+\LL} \leq \delta^2 \mo{(\II+\LL)^{-1}\bar{\vs}}^2_{\II+\LL}.
	\end{align*}	
	The term on the lhs can be bounded as
	\begin{align*}
	& \mo{\qq - (\II+\LL)^{-1}\bar{\vs}}^2_{\II+\LL} \\
	\geq & \mo{\qq - (\II+\LL)^{-1}\bar{\vs}}^2 \geq \frac{1}{(nw_{\rm max})^{2}}\mo{\LL\qq -\LL(\II+\LL)^{-1}\bar{\vs}}^2 \\
	\geq & \frac{1}{(nw_{\rm max})^{2}} \len{\mo{\LL\qq} - \mo{\LL(\II+\LL)^{-1}\bar{\vs}}}^2.
	\end{align*}
	According to the proof of  Lemma~\ref{lm2a},
	\begin{align*}
	\mo{(\II+\LL)^{-1}\bar{\vs}}_{\II+\LL}^2 	\leq & n(n w_{\rm max}+1).
	\end{align*}
	Thus, we have
	\begin{align*}
	\len{\mo{\LL\qq} - \mo{\LL(\II+\LL)^{-1}\bar{\vs}}}^2 \leq \delta^2 n^3w^2_{\rm max} (n w_{\rm max}+1).
	\end{align*}
	On the other hand,
	\begin{align*}
	\mo{\LL(\II+\LL)^{-1}\bar{\vs}}^2
	\geq  \frac{w^2_{\rm min}}{n^4(n w_{\rm max}+1)^2}\mo{\bar{\vs}}^2.
	\end{align*}
	Combining the above  relations leads to
	\begin{align*}
	\frac{\len{\mo{\LL\qq} - \mo{\LL(\II+\LL)^{-1}\bar{\vs}}}}{\mo{\LL(\II+\LL)^{-1}\overline{\vs}}}
	\leq  \sqrt{\frac{\delta^2 n^7 w_{\rm max}^2(n  w_{\rm max}+1)^2}{w^2_{\rm min} \mo{\bar{\vs}}^2} } \leq \frac{\epsilon}{3},
	\end{align*}
	which can be recast as
	\begin{align*}
(1-\frac{\epsilon}{3} )^2 \mo{\LL(\II+\LL)^{-1}\bar{\vs}}^2& \leq \mo{\LL\qq}^2 \\
	&\leq(1+\frac{\epsilon}{3} )^2 \mo{\LL(\II+\LL)^{-1}\bar{\vs}}^2.
	\end{align*}
	Using the condition $0 < \epsilon < \frac{1}{2}$, one obtains
	\begin{align*}
(1-\epsilon ) \mo{\LL(\II+\LL)^{-1}\bar{\vs}}^2& \leq \mo{\LL\tilde{\zz}}^2 \\
	&\leq(1+\epsilon ) \mo{\LL(\II+\LL)^{-1}\bar{\vs}}^2,
	\end{align*}
	which completes the proof.
\end{proof}

Based on the Lemmas~\ref{ST},~\ref{lm1},~\ref{lm2a}, and~\ref{lm3}, we propose a fast and efficient  algorithm \textsc{Approxim} to approximate the  internal conflict $C_{\rm I}(\calG)$,  disagreement  $D(\calG)$,  polarization $P(\calG)$, controversy  $C(\calG)$, and disagreement-controversy index $I_{\rm dc}(\calG)$ for any undirected weighted graph $\calG$. In  Algorithm~\ref{alg:VC}, we present the pseudocode of \textsc{Approxim}, where $\delta$ is less than or equal to the parameters $\delta$'s in Lemmas~\ref{lm1},~\ref{lm2a}, and~\ref{lm3}.

\begin{algorithm}
	\caption{$\textsc{Approxim}\kh{\calG, \vs, \epsilon}$}
	\label{alg:VC}
	\Input{$\calG$: a graph with edge weight in $[ w_{\rm min},w_{\rm max} ]$ 	\\	$\vs$: initial opinion vector \\$\epsilon$: the error parameter in $ (0, \frac{1}{2})$ \\
	}
	\Output{$\{ \tilde{C}_{\rm I}(\calG), \tilde{D}(\calG),  \tilde{P}(\calG),  \tilde{C}(\calG), \tilde{I}_{\rm dc}(\calG)\}$
	}
	$\delta=\frac{\epsilon w_{\rm min}\mo{\bar{\vs}}}{3 w_{\rm max} n^3 (nw_{\rm max}+1) \sqrt{n}}$ \;
	$\bar{\vs} = \vs - \frac{\bm{1}^{\top} \vs}{n} \bm{1}$ \;
	$\tilde{\zz} = \textsc{Solve}(\II + \LL, \vs, \delta)$ \;
	$ \qq  = \textsc{Solve}(\II + \LL, \bar{\vs}, \delta)$ \;
	$\tilde{C}_{\rm I}(\calG) = \mo{\LL  \tilde{\zz} }^2$ \;
	$\tilde{D}(\calG) = \mo{\WW^{1/2}\BB \qq }^2$ \;
	$\tilde{P}(\calG)= \mo{\qq }^2$ \;
	$ \tilde{C}(\calG) = \mo{\tilde{\zz} }^2$ \;
	$ \tilde{I}_{\rm dc}(\calG) = \tilde{D}(\calG) + \tilde{C}(\calG)$ \;
	\textbf{return} $\{ \tilde{C}_{\rm I}(\calG), \tilde{D}(\calG),  \tilde{P}(\calG),  \tilde{C}(\calG), \tilde{I}_{\rm dc}(\calG) \}$ \;
\end{algorithm}

Our approximation algorithm \textsc{Approxim} is both accurate and efficient, as   summarized   in Theorem~\ref{ThmV}.
\begin{theorem}\label{ThmV}
	Given an undirected weighted graph $\calG$ with  $n$ nodes and $m$ edges,   an error   parameter $\epsilon \in (0, \frac{1}{2})$, and the internal opinion vector $\vs$,  the algorithm $\textsc{Approxim}\kh{\calG, \vs, \epsilon}$ runs in expected time  $O\left(m\log^{4}{n}\log\left(\frac{r}{\epsilon}\right)\right)$ where $r=\frac{w_{\rm max}}{w_{\rm min}}$, and returns the $\eps$-approximation $\tilde{C}_{\rm I}(\calG)$, $\tilde{D}(\calG)$,  $\tilde{P}(\calG)$,  $\tilde{C}(\calG)$, $\tilde{I}_{\rm dc}(\calG)$ for the  internal conflict $C_{\rm I}(\calG)$,  disagreement  $D(\calG)$,  polarization $P(\calG)$, controversy  $C(\calG)$, and disagreement-controversy index $I_{\rm dc}(\calG)$, satisfying
	$\tilde{C}_{\rm I}(\calG) \approx_\eps C_{\rm I}(\calG)$,
	$\tilde{D}(\calG) \approx_\eps D (\calG)$,
	$\tilde{P}(\calG) \approx_\eps P (\calG)$,
	$\tilde{C}(\calG) \approx_\eps C (\calG)$, and
	$\tilde{I}_{\rm dc}(\calG) \approx_\eps I_{\rm dc}(\calG)$.
\end{theorem}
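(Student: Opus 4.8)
The plan is to read Theorem~\ref{ThmV} as the assembly of the three approximation lemmas, once we check that the single tolerance $\delta$ fixed in Algorithm~\ref{alg:VC} is simultaneously admissible for all of them. First I would observe that the value $\delta=\frac{\epsilon w_{\rm min}\mo{\bar{\vs}}}{3 w_{\rm max} n^3 (nw_{\rm max}+1) \sqrt{n}}$ hard-coded in the pseudocode is exactly the bound demanded by Lemma~\ref{lm3}, and I would verify the two elementary inequalities showing that this same $\delta$ is no larger than the tolerances required in Lemma~\ref{lm1} and Lemma~\ref{lm2a}. Since the solver guarantee in Lemma~\ref{ST} only improves as $\delta$ shrinks, all three lemmas then apply verbatim to the two outputs $\tilde{\zz}=\textsc{Solve}(\II+\LL,\vs,\delta)$ and $\qq=\textsc{Solve}(\II+\LL,\bar{\vs},\delta)$ produced by the algorithm.

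With admissibility settled, the accuracy claims follow quantity by quantity. Controversy is handled by Lemma~\ref{lm1} with $\xx=\vs$ and $\yy=\tilde{\zz}$, giving $\tilde{C}(\calG)=\mo{\tilde{\zz}}^2\approx_\epsilon C(\calG)$ through the identity $C(\calG)=\mo{(\II+\LL)^{-1}\vs}^2$ of Lemma~\ref{PropA}; polarization is the same lemma with $\xx=\bar{\vs}$ and $\yy=\qq$. For disagreement I would invoke Lemma~\ref{lm2a}, first noting that the true value is insensitive to centering: because $\BB\bm{1}=\bm{0}$ and $(\II+\LL)^{-1}\bm{1}=\bm{1}$, one has $\WW^{1/2}\BB(\II+\LL)^{-1}\vs=\WW^{1/2}\BB(\II+\LL)^{-1}\bar{\vs}$, whence $\mo{\WW^{1/2}\BB\qq}^2\approx_\epsilon D(\calG)$. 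The same centering argument, now using $\LL\bm{1}=\bm{0}$, shows $\LL(\II+\LL)^{-1}\vs=\LL(\II+\LL)^{-1}\bar{\vs}$; rerunning the estimate of Lemma~\ref{lm3} on $\tilde{\zz}$ (which solves against $\vs$) instead of on $\qq$ then yields $\tilde{C}_{\rm I}(\calG)=\mo{\LL\tilde{\zz}}^2\approx_\epsilon C_{\rm I}(\calG)$ --- this is in fact the conclusion actually reached at the end of the proof of Lemma~\ref{lm3}. Finally, since $\tilde{I}_{\rm dc}(\calG)=\tilde{D}(\calG)+\tilde{C}(\calG)$, the additivity of $\approx_\epsilon$ (if $a\approx_\epsilon b$ and $c\approx_\epsilon d$ then $a+c\approx_\epsilon b+d$) immediately gives $\tilde{I}_{\rm dc}(\calG)\approx_\epsilon D(\calG)+C(\calG)=I_{\rm dc}(\calG)$.

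For the running time I would note that, apart from the two \textsc{Solve} calls, every line of Algorithm~\ref{alg:VC} is a centering, a matrix--vector product against $\LL$ or $\WW^{1/2}\BB$, or a norm evaluation, each costing $O(m)$; hence the cost is dominated by the two solves, each running in expected time $O(m\log^3 n\log(1/\delta))$ by Lemma~\ref{ST}. It then remains to simplify $\log(1/\delta)$: substituting the explicit $\delta$ gives $\log(1/\delta)=O\kh{\log\frac{w_{\rm max}}{w_{\rm min}}+\log\frac{1}{\epsilon}+\log n+\log(nw_{\rm max})+\log\frac{1}{\mo{\bar{\vs}}}}$, and under the standard assumption that the edge weights and $1/\mo{\bar{\vs}}$ are polynomially bounded in $n$ the factors $\log(nw_{\rm max})$ and $\log\frac{1}{\mo{\bar{\vs}}}$ are both $O(\log n)$, so this collapses to $O(\log n+\log(r/\epsilon))$ with $r=w_{\rm max}/w_{\rm min}$. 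Multiplying by $\log^3 n$ and using $\log(r/\epsilon)=\Omega(1)$ to absorb the residual $O(m\log^4 n)$ term yields the claimed $O(m\log^4 n\log(r/\epsilon))$ expected time.

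The main obstacle I anticipate is not the accuracy argument --- that is essentially bookkeeping over the lemmas --- but the clean statement of the running time. The delicate points are (i) justifying that the input-dependent factor $\log\frac{1}{\mo{\bar{\vs}}}$ and the bare weight factor inside $\log(nw_{\rm max})$ can both be absorbed into $O(\log n)$, which requires an explicit polynomial-boundedness hypothesis on the weights together with a lower bound on $\mo{\bar{\vs}}$, and (ii) reconciling the direction of the $\approx_\epsilon$ relation, since the lemmas certify that the \emph{exact} quantity lies within $1\pm\epsilon$ of the computed one whereas the theorem asserts the reverse containment; for $\epsilon\in(0,\tfrac12)$ these differ only by a harmless rescaling of $\epsilon$, which I would absorb at the outset.
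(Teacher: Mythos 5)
Your proposal is correct and follows exactly the route the paper intends: the theorem is meant to be the direct assembly of Lemmas~\ref{ST}, \ref{lm1}, \ref{lm2a}, and~\ref{lm3} applied to the two \textsc{Solve} outputs, with the algorithm's hard-coded $\delta$ (the Lemma~\ref{lm3} tolerance) dominating the other two. You in fact supply details the paper leaves implicit --- the centering identities $\WW^{1/2}\BB(\II+\LL)^{-1}\vs=\WW^{1/2}\BB(\II+\LL)^{-1}\bar{\vs}$ and $\LL(\II+\LL)^{-1}\vs=\LL(\II+\LL)^{-1}\bar{\vs}$ needed to match the lemmas to the algorithm's actual inputs, the direction of $\approx_\eps$, and the hypotheses required to collapse $\log(1/\delta)$ to $O(\log n+\log(r/\eps))$ --- all of which are accurate.
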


\begin{table*}
	\fontsize{7.5}{8.0}\selectfont
	\centering
	\caption{Statistics of real-world networks used in our experiments and comparison of running time (seconds, $s$) between \textsc{Exact} and \textsc{Approxim} for three internal distributions (uniform distribution, exponential distribution, and power-law distribution) with error parameter $\epsilon=10^{-6}$. }\label{T1}
	\begin{tabular}{lrrlrrlrrlrr}
		\toprule
		&&&&&\multicolumn{6}{c}{Running time ($s$) for algorithms \textsc{Exact} and \textsc{Approxim}} \\
		\cmidrule{5-12}
		\multirow{2}*{Network} & \multirow{2}*{$n'$} & \multirow{2}*{$m'$} &&\multicolumn{2}{c}{Uniform distribution}
		&& \multicolumn{2}{c}{Exponential distribution}
		&& \multicolumn{2}{c}{Power-law distribution} \\
		\cmidrule{5-6}
		\cmidrule{8-9}
		\cmidrule{11-12}
		& & &  & \textsc{Exact}& \textsc{Approxim}&& \textsc{Exact} & \textsc{Approxim}&& \textsc{Exact}& \textsc{Approxim} \\
		\midrule
		Erd\"{o}s992 & 4,991 & 7,428 & & 2.24 & 2.66 & & 2.29 & 2.65 & & 2.51 & 2.72\\
		Advogato & 5,054 & 43,015 & & 2.75 & 2.80 & & 2.43 & 2.71 & & 2.51 & 2.77\\
		PagesGovernment & 7,057 & 89,429 & & 7.52 & 2.70 & & 9.72 & 2.75 & & 7.82 & 2.59\\
		WikiElec & 7,066 & 100,727 & & 7.59 & 2.69 & & 6.68 & 2.69 & & 7.34 & 2.63\\
		HepPh & 11,204 & 117,619 & & 28.14 & 2.74 & & 27.67 & 2.61 & & 27.58 & 2.62\\
		Anybeat & 12,645 & 49,132 & & 40.59 & 2.65 & & 40.87 & 2.74 & & 40.21 & 2.73\\
		PagesCompany & 14,113 & 52,126 & & 56.02 & 2.78 & & 55.61 & 2.96 & & 55.67 & 2.75\\
		AstroPh & 17,903 & 196,972 & & 115.50 & 2.91 & & 117.64 & 2.87 & & 117.59 & 2.82\\
		CondMat & 21,363 & 91,286 & & 204.15 & 2.83 & & 215.57 & 3.04 & & 200.75 & 2.59\\
		Gplus & 23,613 & 39,182 & & 279.41 & 2.82 & & 281.00 & 2.96 & & 274.52 & 2.75\\
		GemsecRO & 41,773 & 125,826 & & 1585.75 & 3.52 & & 1621.83 & 3.17 & & 1587.00 & 3.27\\
		GemsecHU & 47,538 & 222,887 & & 2410.30 & 6.52 & & 2395.84 & 7.24 & & 2430.26 & 6.95\\
		PagesArtist & 50,515 & 819,090 & & 2895.73 & 10.06 & & 2976.45 & 7.50 & & 3017.09 & 8.07\\
		Brightkite & 56,739 & 212,945 & & 4203.67 & 8.66 & & 4312.25 & 6.74 & & 4072.74 & 8.53\\
		Themarker & 69,317 & 1,644,794 & & --- & 5.02 & & --- & 5.04 & & --- & 5.09\\
		Slashdot & 70,068 & 358,647 & & --- & 3.31 & & --- & 3.36 & & --- & 3.34\\
		BlogCatalog & 88,784 & 2,093,195 & & --- & 5.54 & & --- & 5.50 & & --- & 5.53\\
		WikiTalk & 92,117 & 360,767 & & --- & 3.21 & & --- & 3.28 & & --- & 3.29\\
		Buzznet & 101,163 & 2,763,066 & & --- & 6.39 & & --- & 6.33 & & --- & 6.31\\
		LiveMocha & 104,103 & 2,193,083 & & --- & 6.46 & & --- & 6.44 & & --- & 6.46\\
		Douban & 154,908 & 327,162 & & --- & 3.30 & & --- & 3.35 & & --- & 3.37\\
		Gowalla & 196,591 & 950,327 & & --- & 4.43 & & --- & 4.30 & & --- & 4.29\\
		Academia & 200,167 & 1,022,440 & & --- & 4.51 & & --- & 4.44 & & --- & 4.49\\
		GooglePlus & 201,949 & 1,133,956 & & --- & 4.33 & & --- & 4.33 & & --- & 4.30\\
		Citeseer & 227,320 & 814,134 & & --- & 4.04 & & --- & 3.87 & & --- & 4.02\\
		MathSciNet & 332,689 & 820,644 & & --- & 4.32 & & --- & 4.37 & & --- & 4.27\\
		TwitterFollows & 404,719 & 713,319 & & --- & 3.91 & & --- & 3.94 & & --- & 3.86\\
		Flickr & 513,969 & 3,190,452 & & --- & 8.21 & & --- & 8.22 & & --- & 8.07\\
		Delicious & 536,108 & 1,365,961 & & --- & 5.48 & & --- & 5.39 & & --- & 5.30\\
		IMDB & 896,305 & 3,782,447 & & --- & 11.10 & & --- & 11.22 & & --- & 10.75\\
		YoutubeSnap & 1,134,890 & 2,987,624 & & --- & 9.00 & & --- & 8.89 & & --- & 8.98\\
		Lastfm & 1,191,805 & 4,519,330 & & --- & 11.79 & & --- & 11.79 & & --- & 12.12\\
		Pokec & 1,632,803 & 22,301,964 & & --- & 75.91 & & --- & 76.50 & & --- & 76.14\\
		Flixster & 2,523,386 & 7,918,801 & & --- & 19.46 & & --- & 19.67 & & --- & 19.56\\
		LiveJournal & 4,033,137 & 27,933,062 & & --- & 80.60 & & --- & 80.15 & & --- & 79.40\\			\bottomrule
	\end{tabular}
\end{table*}
\begin{table*}
	\centering
	\caption{Relative error for  estimated $\tilde{C}_{\rm I}(\calG)$,  $\tilde{D}(\calG)$,  $\tilde{P}(\calG)$, $\tilde{C}(\calG)$  for three internal distributions with input parameter $\epsilon=10^{-6}$.}\label{T2}
	\resizebox{460pt}{75pt}{
		\begin{tabular}{llcccclcccclcccc}
			\toprule
			\multirow{3}*{Network $\calG$} & & \multicolumn{14}{c}{Relative error of four estimated quantities for three internal opinion distributions ($\times 10^{-7}$)} \\
			\cmidrule{3-16}
			& & \multicolumn{4}{c}{Uniform distribution} & & \multicolumn{4}{c}{Exponential distribution} & &
			\multicolumn{4}{c}{Power-law distribution} \\
			\cmidrule{3-6}
			\cmidrule{8-11}
			\cmidrule{13-16}
			& & $\tilde{C}_{\rm I}(\calG)$ & $\tilde{D}(\calG)$ & $\tilde{P}(\calG)$ & $\tilde{I}_{\rm dc}(\calG)$ & & $\tilde{C}_{\rm I}(\calG)$ & $\tilde{D}(\calG)$ & $\tilde{P}(\calG)$ & $\tilde{I}_{\rm dc}(\calG)$ & & $\tilde{C}_{\rm I}(\calG)$ & $\tilde{D}(\calG)$ & $P(\calG)$ & $\tilde{I}_{\rm dc}(\calG)$ \\
			\midrule
			Erd\"{o}s992 & & 1.1002 & 0.0018 & 0.0018 & 0.1379 & & 6.5276 & 0.0077 & 0.0078 & 0.0903 & & 1.3976 & 0.0010 & 0.0009 & 0.0372\\
			Advogato & & 2.6516 & 0.0058 & 0.0077 & 0.0900 & & 2.0039 & 0.0138 & 0.0186 & 0.0827 & & 0.0975 & 0.0022 & 0.0025 & 0.0015\\
			PagesGovernment & & 0.6307 & 0.0214 & 0.0544 & 0.0432 & & 5.1476 & 0.0068 & 0.0165 & 0.0643 & & 0.0926 & 0.0020 & 0.0076 & 0.2168\\
			WikiElec & & 1.0029 & 0.0021 & 0.0030 & 0.0826 & & 4.9890 & 0.0002 & 0.0003 & 0.0582 & & 0.1049 & 0.0074 & 0.0220 & 0.0516\\
			HepPh & & 0.5886 & 0.0009 & 0.0015 & 0.2008 & & 0.6633 & 0.0039 & 0.0063 & 0.0605 & & 2.5963 & 0.0051 & 0.0098 & 0.0380\\
			Anybeat & & 0.1904 & 0.0152 & 0.0185 & 0.0366 & & 0.6107 & 0.0064 & 0.0077 & 0.1517 & & 0.0256 & 0.0044 & 0.0067 & 0.0532\\
			PagesCompany & & 0.6721 & 0.0090 & 0.0129 & 0.0038 & & 0.7049 & 0.0046 & 0.0069 & 0.0061 & & 0.8048 & 0.0099 & 0.0136 & 0.2874\\
			AstroPh & & 0.8058 & 0.0051 & 0.0109 & 0.1292 & & 2.9936 & 0.0020 & 0.0041 & 0.0873 & & 0.0143 & 0.0012 & 0.0120 & 0.0122\\
			CondMat & & 2.0614 & 0.0051 & 0.0088 & 0.1528 & & 2.1314 & 0.0027 & 0.0049 & 0.0670 & & 0.2099 & 0.0056 & 0.0155 & 0.0666\\
			Gplus & & 1.6386 & 0.0012 & 0.0014 & 0.2219 & & 0.4878 & 0.0001 & 0.0001 & 0.0198 & & 0.1471 & 0.0005 & 0.0006 & 0.0756\\
			GemsecRO & & 4.5966 & 0.0230 & 0.0358 & 0.3691 & & 2.1684 & 0.0069 & 0.0109 & 0.1042 & & 2.0774 & 0.0257 & 0.0507 & 0.4375\\
			GemsecHU & & 0.4021 & 0.0342 & 0.0826 & 0.2018 & & 4.6511 & 0.0153 & 0.0369 & 0.1320 & & 0.8842 & 0.0003 & 0.0004 & 0.0831\\
			PagesArtist & & 1.9994 & 0.0093 & 0.0228 & 0.4941 & & 1.3243 & 0.0077 & 0.0194 & 0.4454 & & 0.0419 & 0.0026 & 0.0125 & 0.0766\\
			Brightkite & & 1.2047 & 0.0022 & 0.0026 & 0.1436 & & 0.5782 & 0.0190 & 0.0221 & 0.1564 & & 0.5104 & 0.0421 & 0.0324 & 0.0027\\
			\bottomrule
		\end{tabular}
	}	
\end{table*}

\section{Experiments}

In this section, we assess the efficiency and accuracy of our approximation algorithm $\textsc{Approxim}$. To this end, we implement this algorithm on various real networks and  compare the running time and accuracy of $\textsc{ Approxim }$ with those corresponding to the exact algorithm, called \textsc{Exact}. For the  \textsc{Exact}, it computes relevant quantities by directly inverting matrix $\II+\LL$, performing product of related matrices, and then calculating corresponding  $\ell_2$ norms.

\textbf{Environment.} Our extensive  experiments  were  run on a Linux box  with 4-core 4.2GHz  Intel i7-7700K CPU and   32GB of main memory. Our code for both the approximation algorithm \textsc{Approxim} and  exact algorithm \textsc{Exact} was written in \textit{Julia v1.5.1}.  The  solver \textsc{Solve} we use is based on the technique in~\cite{KySa16},  the \textit{Julia language} implementation for which is open and accessible  on the website~\footnote{http://danspielman.github.io/Laplacians.jl/latest/}.

\textbf{Datasets.} All the real-world networks we consider are  publicly available in the   Koblenz Network Collection~\cite{Ku13}  and Network Repository~\cite{RyNe15}.   The first three columns of Table~\ref{T1} are related information  of networks, including  the network name, the number of nodes, and the number of edges. For a network with $n$ nodes and $m$ edges, we use $n'$ and $m'$ to denote, respectively, the number of nodes and edges in its largest connected component. All our  experiments were conducted on the largest connected components.  The smallest network consists of 4991 nodes, while the largest network  has more than one million. In Table~\ref{T1}, the networks listed in an increasing order of the number of nodes in their largest connected components.

\textbf{ Internal opinion distributions.} In our experiments, the internal opinions are generated according to three different distributions: uniform distribution, exponential distribution, and power-law distribution, with the latter two distributions generated according by the randht.py file in~\cite{ClShNe09}. For the uniform distribution, the opinion $s_i$ of node $i$ is distributed uniformly in the range of $[0,1]$.  For  the exponential distribution, we choose the   probability density $ \mathrm{e}^{ x_{\min }}e^{-x}$ to generate $n'$  positive numbers $x$ with minimum value $x_{\min}>0$. Then, dividing each $x$ by the maximum observed value, we  normalize  these $n'$ numbers to the range $[0,1]$ as the internal opinions of nodes. Note that  there is always a node with internal  opinion 1 due to the normalization operation. Similarly, for the power-law distribution, we use the  probability density $(\alpha-1)x_{\min}^{\alpha-1}x^{-\alpha}$ with $\alpha=2.5$ to generate $n'$ positive numbers, and normalize them to interval $[0,1]$ as the internal opinions.

\textbf{Efficiency.} Table~\ref{T1} reports the running time of the two algorithms \textsc{Approxim} and \textsc{Exact} on different networks we consider in the experiments. We note that  we cannot compute those related quantities using   the algorithm  \textsc{Exact} for the last 7 networks, due to the high memory and time cost, but we can run the algorithm   \textsc{Approxim}.  In all our experiments, we set the error parameter $\epsilon$ equal to $10^{-6}$. For each of the three internal opinion distributions in different networks, we record the running times of \textsc{Approxim} and \textsc{Exact}.  Table~\ref{T1} shows that for all considered networks, the running time for \textsc{Approxim} is less than that for \textsc{Exact}.  For moderately large networks with more than ten thousand nodes, \textsc{Exact} is much slower than \textsc{Approxim}. For example, for the GemsecRO network, the running time for \textsc{Approxim} is less than one hundredth of that for \textsc{Exact}. Finally, for large graphs particularly  those with over 150 thousand nodes, our approximation algorithm \textsc{Approxim} shows  a very obvious efficiency advantage, since \textsc{Exact} fails to run for these networks.


\textbf{Accuracy.} Except for the  high-efficiency, our approximation algorithm \textsc{Approxim} has also high-accuracy in practice. To demonstrate this, we  assess the accuracy of  algorithm \textsc{Approxim} in Table~\ref{T2}. For each of the three distributions of internal opinions, we compare the approximate results of \textsc{Approxim} with the exact result of \textsc{Exact} for all networks examined but the last seven ones in Table~\ref{T2}. For each quantity $\rho$ concerned, we use the mean relative error  $\sigma = | \rho - \tilde{\rho} | /\rho$ of   $\tilde{\rho}$ obtained by \textsc{Approxim} as an estimation of  $\rho$. In Table~\ref{T2}, we present  the mean relative errors of the four estimated quantities, internal conflict  $\tilde{C}_{\rm I}(\calG)$,  disagreement  $\tilde{D}(\calG)$,  polarization $\tilde{P}(\calG)$, and controversy  $\tilde{C}(\calG)$,   for different real networks with input error parameter $\epsilon=10^{-6}$. The results show that for all quantities concerned of all networks examined, the actual relative errors are ignorable, with all errors  less than $10^{-7}$,  much  smaller than the proved theoretical guarantee.  

\section{Conclusion}
Conflict, disagreement, controversy, and polarization in social networks have received considerable recent attention, and some indices have been proposed to quantify these concepts.  However, direct computation of these quantities is computationally challenging for large networks, because it involves inverting matrices. In this paper, we addressed the problem of efficiently computing these quantities in an undirected unweighted graph. To this end, we developed an approximation algorithm that is based on seminal work about Laplacian solver for solving linear system of equations, which can avoid the operation of matrix inversion. The algorithm has an almost linear computation complexity with respect to the number of edges in the graph, and simultaneously possesses a theoretical guarantee for accuracy. We performed extensive experiments on a diverse set of real network datasets to demonstrate that our presented approximation algorithm works both efficiently and effectively, especially for large-scale networks with millions of nodes.

\providecommand{\noopsort}[1]{}\providecommand{\singleletter}[1]{#1}%

\bibliographystyle{ACM-Reference-Format}


\end{document}